\newcommand{\eps}{\varepsilon}
\newcommand{\R}{$\mathcal{R}$}
\newcommand{\J}{$\mathcal{J}$}
\newcommand{\makeset}[2]{\ensuremath{ \{ #1 \: | \: #2 \} }}
\renewcommand{\sc}{\mathrm{sc}}
\spnewtheorem{fact}{Fact}{\bf}{\it}
\begin{document}
\title{On the State Complexity of the Reverse of \R-~and \J-trivial Regular Languages}
\titlerunning{State Complexity of the Reverse of \R- and \J-trivial Regular Languages}
\authorrunning{G. Jir\'{a}skov\'{a}, T. Masopust}
\author{
  Galina Jir\'{a}skov\'{a}\,\inst{1,}%
  \thanks{Research supported by VEGA grant 2/0183/11 and by grant APVV-0035-10.}
  \and
  Tom\'{a}\v{s} Masopust\,\inst{2,3,}%
  \thanks{Research supported by GA{\v C}R grant P202/11/P028 and by RVO: 67985840.}%
  }
\institute{
  Mathematical Institute, Slovak Academy of Sciences\\
  Gre{\v s}{\' a}kova 6, 040 01 Ko\v{s}ice, Slovak Republic\\
  \email{jiraskov@saske.sk}
  \and
  Institute of Mathematics, Academy of Sciences of the Czech Republic\\
  {\v Z}i{\v z}kova 22, 616 62 Brno, Czech Republic\\
  \email{masopust@math.cas.cz}
  \and
  Institute for Computer Science, University of Bayreuth
}

\maketitle
\begin{abstract}
  The tight upper bound on the state complexity of the reverse of \R-trivial and \J-trivial regular languages of the state complexity $n$ is $2^{n-1}$. The witness is ternary for \R-trivial regular languages and $(n-1)$-ary for \J-trivial regular languages. In this paper, we prove that the bound can be met neither by a binary \R-trivial regular language nor by a \J-trivial regular language over an $(n-2)$-element alphabet. We provide a characterization of tight bounds for \R-trivial regular languages depending on the state complexity of the language and the size of its alphabet. We show the tight bound for \J-trivial regular languages over an $(n-2)$-element alphabet and a few tight bounds for binary \J-trivial regular languages. The case of \J-trivial regular languages over an $(n-k)$-element alphabet, for $2\le k\le n-3$, is open.
\end{abstract}

\section{Introduction}
  Regular languages of simple forms play an important role in mathematics and computer science. The reader is referred to, e.g.,~\cite{Bojanczyk:2012,2013arXiv1303.0966C,Rogers:2007} for a few applications of \J-trivial (piecewise testable) languages. The aim of this paper is to investigate the state complexity of the reverse of two such language classes, namely of \R-trivial and \J-trivial regular languages. 
  
  For a regular language, the state complexity is the number of states of its minimal automaton representation. The reverse of an automaton or of a language is a classical operation whose state complexity is exponential in the worst case. There exist binary witness languages of the state complexity $n$ with the reverse of the state complexity $2^{n}$, see~\cite{le81,yzs94}. This even holds true for union-free regular languages defined by regular expressions without the union operation~\cite{ijfcsJiraskovaM11}.

  As mentioned above, we consider languages defined by Green's equivalence relations, namely \R-trivial and \J-trivial regular languages. Let $M$ be a monoid and $s$ and $t$ be two elements of $M$. Green's relations $\mathcal{L}$, \R, \J, and $\mathcal{H}$ on $M$ are defined so that
    $(s,t)\in\mathcal{L}$ if and only if $M\cdot s = M\cdot t$,
    $(s,t)\in\mathcal{R}$ if and only if $s\cdot M = t\cdot M$,
    $(s,t)\in\mathcal{J}$ if and only if $M\cdot s\cdot M = M\cdot t\cdot M$, and
    $\mathcal{H}=\mathcal{L}\cap\mathcal{R}$.
  For $\rho\in\{\mathcal{L},\mathcal{R},\mathcal{J},\mathcal{H}\}$, $M$ is $\rho$\mbox{-}trivial if $(s,t)\in\rho$ implies $s = t$, for all $s,t$ in $M$. A language is {\em $\rho$\mbox{-}trivial} if its syntactic monoid is $\rho$-trivial. Note that $\mathcal{H}$-trivial regular languages coincide with {\em star-free} languages~\cite[Chapter~11]{lawson2003finite} and that $\mathcal{L}$-trivial, \R-trivial and \J-trivial regular languages are all star-free. Moreover, \J-trivial regular languages are both $\mathcal{L}$-trivial and \R-trivial.

  Equivalently, a regular language is \R-trivial if and only if it is a finite union of languages of the form $\Sigma_1^* a_1 \Sigma_2^* a_2 \Sigma_3^* \cdots \Sigma_k^* a_k \Sigma^*$, where $k\ge 0$, $a_i\in \Sigma$, and $\Sigma_i\subseteq \Sigma \setminus \{a_i\}$, or if and only if it is accepted by a partially ordered minimal DFA~\cite{BrzozowskiF80}. Similarly, a regular language is \J-trivial (or {\em piecewise testable}) if and only if it is a finite boolean combination of languages of the form $\Sigma^*a_1\Sigma^*a_2\Sigma^*\ldots \Sigma^*a_k\Sigma^*$, where $k\ge 0$ and $a_i\in \Sigma$, or if and only if the minimal DFAs for both the language and the reverse of the language are partially ordered~\cite{Simon1972,Simon1975}. Other automata representations of these languages can be found, e.g., in~\cite{LauserCIAA} and the literature therein. Stern~\cite{Stern85a} suggested a polynomial algorithm 
  of order $O(n^5)$ in the number of states and transitions of the minimal DFA 
  to decide whether a regular language is \J-trivial. Trahtman~\cite{Trahtman2001} recently improved this result to a quadratic algorithm. 
  
  In~\cite{ciaa2012}, we have shown that the upper bound on the state complexity of the reverse of \R-trivial and \J-trivial regular languages is $2^{n-1}$ for languages of the state complexity $n$. We have also shown that this bound can be met by a ternary \R-trivial regular language and conjectured that an ($n-1$)-element alphabet is sufficient for \J-trivial regular languages of the state complexity $n$ to meet the upper bound, which was later proved in~\cite{BrArXiv12}. In this paper, we prove the optimality of the size of these alphabets. Namely, we prove that the bound on the state complexity of the reverse can be met neither by a binary \R-trivial regular language (Lemma~\ref{lem:binaryRtriv_upper}) nor by a \J-trivial regular language over an $(n-2)$-element alphabet (Theorem~\ref{thm:boundJtriv}). As a result, we provide a complete characterization of tight upper bounds for \R-trivial regular languages depending on the state complexity of the language and the size of its alphabet (Theorem~\ref{thm1}). Finally, we prove a tight upper bound for \J-trivial regular languages over $(n-2)$-element alphabets (Theorem~\ref{thm3}) and several tight bounds for binary \J-trivial regular languages (Table~\ref{tab}). The case of \J-trivial regular languages over $(n-k)$-element alphabets, for $2\le k\le n-3$, is left open.

\section{Preliminaries and Definitions}
  We assume that the reader is familiar with automata and formal language theory. The cardinality of a set $A$ is denoted by $|A|$, and the powerset of $A$ is denoted by $2^A$. An alphabet is a finite nonempty set. The free monoid generated by an alphabet $\Sigma$ is denoted by $\Sigma^*$. A string over $\Sigma$ is any element of $\Sigma^*$, and the empty string is denoted by $\eps$.

  A {\em nondeterministic finite automaton\/} (NFA) is a 5-tuple $M = (Q,\Sigma,\delta,Q_0,F)$, where $Q$ is the finite nonempty set of states, $\Sigma$ is the input alphabet, $Q_0\subseteq Q$ is the set of initial states, $F\subseteq Q$ is the set of accepting states, and $\delta:Q\times\Sigma\to 2^Q$ is the transition function that can be extended to the domain $2^Q\times\Sigma^*$. The language {\em accepted\/} by $M$ is the set $L(M) = \{w\in\Sigma^* \mid \delta(Q_0, w) \cap F \neq\emptyset\}$. The NFA $M$ is {\em deterministic\/} (DFA) if $|Q_0|=1$ and $|\delta(q,a)|=1$ for every $q$ in $Q$ and $a$ in $\Sigma$. In this case we identify singleton sets with their elements and simply write $q$ instead of $\{q\}$. Moreover, the transition function $\delta$ is a total map from $Q\times\Sigma$ to $Q$ that can be extended to the domain $Q\times\Sigma^*$. Two states of a DFA are {\em distinguishable\/} if there exists a string $w$ that is accepted from one of them and rejected from the other; otherwise they are {\em equivalent}. A DFA is {\em minimal\/} if all its states are reachable and pairwise distinguishable. A non-accepting state $d\in Q$ such that $\delta(d,a)=d$, for all $a$ in $\Sigma$, is called a {\em dead\/} state.
  
  The \emph{state complexity} of a regular language $L$, denoted by $\sc(L)$, is the number of states in the minimal DFA accepting the language $L$.

  The {\em subset automaton\/} of an NFA $M=(Q,\Sigma,\delta,Q_0,F)$ is the DFA  $M'=(2^Q,\Sigma,\delta',Q_0,F')$ constructed by the standard subset construction.
  
  Let $M=(Q,\Sigma,\delta,Q_0,F)$ be a DFA. The reachability relation $\preceq$ on the states of $M$ is defined by $p\preceq q$ if there exists a string $w$ in $\Sigma^*$ such that $\delta(p,w)=q$. The DFA $M$ is {\em partially ordered\/} if the reachability relation $\preceq$ is a partial order. For two states $p$ and $q$ of $M$, we write $p \prec q$ if $p\preceq q$ and $p\ne q$. A state $p$ is {\em maximal\/} if there is no state $q$ such that $p\prec q$. 

  The \emph{reverse $w^R$ of a string $w$\/} is defined by $\eps^R=\eps$ and $(va)^R=av^R$, for $v$ in $\Sigma^*$ and $a$ in $\Sigma$. The \emph{reverse of a language\/} $L$ is the language $L^R=\{w^R\mid w\in L\}$. The \emph{reverse of a DFA\/} $M$ is the NFA $M^R$ obtained from $M$ by reversing all transitions and swapping the role of initial and accepting states. 
  The following result says that there are no equivalent states in the subset automaton of the reverse of a minimal DFA. We use this fact in the paper when proving the tightness of upper bounds. By this fact, it is sufficient to show that the corresponding number of states is reachable in the subset automaton since the distinguishability always holds.
  \begin{fact}[\cite{br63}]\label{le:equiv}
    All states of the subset automaton corresponding to  the reverse of a minimal DFA are pairwise distinguishable.
    \qed
  \end{fact}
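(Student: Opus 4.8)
The plan is to reduce acceptance in the subset automaton of $M^R$ to a single membership test in the forward DFA, and then to invoke the reachability of every state of a minimal DFA to separate any two subsets. Write $M=(Q,\Sigma,\delta,q_0,F)$ for the minimal DFA. Then $M^R=(Q,\Sigma,\delta^R,F,\{q_0\})$, where $\delta^R(q,a)=\{p\in Q\mid \delta(p,a)=q\}$, and I denote by $D$ its subset automaton, whose states are subsets $S\subseteq Q$. Since $q_0$ is the only accepting state of $M^R$, a subset $S$ is accepting in $D$ exactly when $q_0\in S$.

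First I would establish a characterization of the transitions of $D$. Unwinding the definition of $\delta^R$ and using determinism of $M$, one checks by induction on $|w|$ that, for every $S\subseteq Q$ and every string $w$,
\[
  \delta^R(S,w)=\{p\in Q\mid \delta(p,w^R)\in S\}.
\]
Consequently a string $w$ is accepted from $S$ in $D$ if and only if $q_0\in\delta^R(S,w)$, that is, if and only if $\delta(q_0,w^R)\in S$. In words, acceptance of $w$ from the subset $S$ depends only on whether the unique state reached in $M$ from $q_0$ by reading $w^R$ lies in $S$.

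With this in hand, the distinguishability of two distinct subsets is immediate. Given $S_1\neq S_2$, choose a state $q$ in their symmetric difference, say $q\in S_1\setminus S_2$. Because $M$ is minimal, all of its states are reachable, so there is a string $u$ with $\delta(q_0,u)=q$; put $w=u^R$. Then $\delta(q_0,w^R)=\delta(q_0,u)=q$, so by the characterization above $w$ is accepted from $S_1$ (as $q\in S_1$) and rejected from $S_2$ (as $q\notin S_2$). Hence $w$ distinguishes $S_1$ and $S_2$, which proves the claim for any two distinct subsets, and in particular for all reachable states of $D$.

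The one point demanding care is the bookkeeping in the transition characterization: reversing the automaton swaps the initial and accepting roles and replaces $\delta$ by its inverse relation, and one must track the direction of the reversal correctly so that reading $w$ in $M^R$ corresponds to reading $w^R$ in $M$. Once this identity is verified, minimality enters only through reachability of every state of $M$, which is exactly what lets the abstract membership test $q\in S$ be realized by the concrete input $u^R$.
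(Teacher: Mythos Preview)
Your argument is correct and is precisely the standard proof of this classical observation due to Brzozowski. The paper itself does not provide a proof: the statement is labelled as a Fact with a citation to~\cite{br63} and an immediate \qed, so there is nothing in the paper to compare against beyond the reference. Your write-up supplies exactly the details one would expect---the key identity $\delta^R(S,w)=\{p\mid \delta(p,w^R)\in S\}$, followed by reachability of an element of the symmetric difference---and the bookkeeping on directions of reversal is handled correctly.
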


  In what follows we implicitly use the characterization that a regular language is \R-trivial if and only if it is accepted by a minimal partially ordered DFA and that it is \J-trivial if and only if both the language and its reverse are accepted by minimal partially ordered DFAs. This characterization immediately implies that \J-trivial regular languages are closed under reverse. However, \R-trivial regular languages are not closed under reverse since not all \R-trivial regular languages are \J-trivial. For instance, the \R-trivial regular language of Fig.~\ref{fi:binaryRtriv} is not \J-trivial, hence the minimal DFA for its reverse is not partially ordered.

  The following lemma shows that in some cases we do not need to distinguish between DFAs with and without dead state. In particular, we can get a result for DFAs without a dead state immediately from the analogous result for DFAs with a dead state or vice versa.\footnote{We are grateful to an anonymous referee for pointing out this observation.} 
  \begin{lemma}\label{lem:complement}
    Let $L$ be a regular language. Then $\sc(L)=\sc(L^c)$, where $L^c$ denotes the complement of $L$. 
    In particular, we have $\sc(L^R)=\sc((L^c)^R)$.
  \end{lemma}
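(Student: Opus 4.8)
The plan is to prove the main equality $\sc(L)=\sc(L^c)$ by directly transforming the minimal DFA for $L$ into a minimal DFA for $L^c$ having the same number of states, and then to derive the ``in particular'' claim by commuting complement and reverse.

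First I would take the minimal DFA $M=(Q,\Sigma,\delta,q_0,F)$ for $L$ and form $M^c=(Q,\Sigma,\delta,q_0,Q\setminus F)$, obtained by interchanging accepting and non-accepting states while leaving $Q$, $\delta$, and $q_0$ untouched. A one-line computation shows $w\in L(M^c)$ iff $\delta(q_0,w)\notin F$ iff $w\notin L$, so $L(M^c)=L^c$. Since $M^c$ has exactly $|Q|$ states, it remains to argue that $M^c$ is minimal. Reachability is immediate: $M$ and $M^c$ share the same transition structure and initial state, so every state reachable in the minimal $M$ is reachable in $M^c$. For distinguishability, note that a string $w$ separates two states $p,q$ precisely when $\delta(p,w)$ and $\delta(q,w)$ differ in their membership in $F$; this condition is unaffected by replacing $F$ with $Q\setminus F$, so the very strings that distinguish states in $M$ distinguish them in $M^c$. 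Hence $M^c$ is a minimal DFA for $L^c$ on $|Q|$ states, giving $\sc(L^c)=|Q|=\sc(L)$.

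For the second equality I would first record that complement and reverse commute: for any $w$ we have $w\in(L^R)^c$ iff $w^R\notin L$ iff $w^R\in L^c$ iff $w\in(L^c)^R$, so $(L^R)^c=(L^c)^R$. Applying the already-established identity to the regular language $L^R$ then yields $\sc(L^R)=\sc((L^R)^c)=\sc((L^c)^R)$, as required.

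The one point deserving care — and the closest thing to an obstacle — is the treatment of the dead state implicit in the total transition function $\delta$. Interchanging $F$ and $Q\setminus F$ turns the non-accepting dead state of $M$ into an accepting sink in $M^c$, and vice versa, but this neither changes the state count nor merges states, since the distinguishability argument above applies uniformly to every state, including any sink. Thus the equalities hold verbatim whether or not $L$ (or $L^c$) needs a dead state in its minimal DFA, which is exactly the flexibility for which the lemma is invoked in the sequel.
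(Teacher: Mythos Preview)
Your proof is correct and follows essentially the same approach as the paper: form $M^c$ by swapping accepting and non-accepting states to obtain a minimal DFA for $L^c$, and then derive the second equality from the identity $(L^R)^c=(L^c)^R$. You simply spell out the minimality of $M^c$ (reachability and distinguishability) and the dead-state remark more explicitly than the paper, which states these points tersely.
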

  \begin{proof}
    Let $M$ be a minimal DFA accepting $L$. Then $M^c$ constructed from $M$ by swapping accepting and non-accepting states is a minimal DFA accepting $L^c$. The second part now follows by the observation that $(L^R)^c=(L^c)^R$. 
    Indeed, $w\in (L^R)^c$ if and only if $w\notin L^R$ if and only if $w^R\notin L$ if and only if $w^R\in L^c$ if and only if $w\in (L^c)^R$.
    \qed
  \end{proof}

  Let $M$ be a DFA with a dead state reaching the upper bound on the reverse. This lemma says that if the complement of $M$ does not have a dead state, the same result can be reached by DFAs without a dead state. Indeed, the complement of $M$ reaches the bound. However, Table~\ref{tab} demonstrates that there are cases where this technique fails because both the DFA and its complement have a dead state. 
  
  Immediate consequences of this lemma combined with the known results are formulated below.
  \begin{corollary}\label{cor:bound}$ $
    \begin{itemize}
      \item[(i)] There exist ternary \R-trivial regular languages $L_1$ and $L_2$ whose automaton representation has and does not have a dead state, respectively, with $\sc(L_1)=\sc(L_2)=n$ and $\sc(L_1^R)=\sc(L_2^R)=2^{n-1}$. 
      \item[(ii)] There exist \J-trivial regular languages $L_1$ and $L_2$ over an alphabet $\Sigma$ with $|\Sigma|\ge n-1$ whose automaton representation has and does not have a dead state, respectively, with $\sc(L_1)=\sc(L_2)=n$ and $\sc(L_1^R)=\sc(L_2^R)=2^{n-1}$.
    \end{itemize}
  \end{corollary}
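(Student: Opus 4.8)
The plan is to obtain both parts as immediate consequences of Lemma~\ref{lem:complement} together with the known witnesses meeting the bound. For part~(i) I would take $L_1$ to be the ternary \R-trivial witness of~\cite{ciaa2012} with $\sc(L_1)=n$ and $\sc(L_1^R)=2^{n-1}$, and for part~(ii) the $(n-1)$-ary \J-trivial witness of~\cite{BrArXiv12}; in both cases the minimal DFA of $L_1$ contains a dead state. I would then set $L_2=L_1^c$ and argue that $L_2$ is a witness of the same kind but \emph{without} a dead state, which is exactly the dead-state-free representation demanded by the statement.

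Three points then need to be checked. First, complementation leaves the syntactic monoid unchanged (the syntactic congruences of a language and of its complement coincide), so $L_2$ lies in the same class---\R-trivial for~(i), \J-trivial for~(ii)---and is defined over the same alphabet as $L_1$; in particular the alphabet sizes ($3$ and $|\Sigma|\ge n-1$) are preserved. Second, Lemma~\ref{lem:complement} gives $\sc(L_2)=\sc(L_1)=n$ and $\sc(L_2^R)=\sc(L_1^R)=2^{n-1}$, so $L_2$ meets the bound. Third, the minimal DFA of $L_2$ is obtained from that of $L_1$ by swapping accepting and non-accepting states; the dead state of $L_1$ thereby becomes an accepting sink, while $L_2$ possesses a dead state if and only if $L_1$ possessed an accepting sink, that is, a maximal accepting state (recall that in a partially ordered DFA every maximal state is a sink). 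Hence the only remaining task is to confirm that the chosen $L_1$ has no accepting sink.

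The one nontrivial point, and the main obstacle, is this last verification. I would inspect the explicit witnesses of~\cite{ciaa2012} and~\cite{BrArXiv12} and confirm that no maximal state of the minimal DFA of $L_1$ is accepting, equivalently that all of $\Sigma^*$ is accepted from no single state, so that the only sink is the dead one. Granting this, $L_1$ has a dead state whereas $L_2=L_1^c$ does not, and both are witnesses of the prescribed form, which establishes~(i) and~(ii). By minimality a DFA has at most one accepting and at most one non-accepting sink, so the dichotomy is clean; the degenerate situation in which a witness carries both sinks---whence its complement would again have a dead state---is precisely the failure of the technique recorded in Table~\ref{tab}, and it does not arise for the witnesses used here.
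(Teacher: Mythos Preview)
Your proposal is correct and follows essentially the same route as the paper: take the known witness $L_1$ from \cite{ciaa2012} (for~(i)) or \cite{BrArXiv12} (for~(ii)), set $L_2=L_1^c$, and invoke Lemma~\ref{lem:complement}. The paper's proof is terser---it simply asserts that the \cite{ciaa2012} automaton has a dead state while its complement does not, and that (ii) follows from the \cite{BrArXiv12} automaton---whereas you spell out the three verification steps (preservation of the variety via the syntactic monoid, the state-complexity equalities, and the sink analysis), which is a welcome clarification but not a different argument.
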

  \begin{proof}
    Using Lemma~\ref{lem:complement},
    (i) follows from~\cite[Lemma~3, p.~232]{ciaa2012} since the automaton used there has a dead state and its complement does not,
    while (ii) follows from the automaton used in~\cite[Theorem~5, p.~15]{BrArXiv12}.
  \qed
  \end{proof}

\section{\R-trivial regular languages}
  Recall that the state complexity of the reverse for \R-trivial regular languages with the state complexity $n$ is $2^{n-1}$ and there exists a ternary witness language meeting the bound~\cite{ciaa2012}. 
  We now prove that the ternary alphabet is optimal,
  that is, the bound
  cannot be met by any binary \R-trivial regular language. 

  \begin{lemma}\label{lem:binaryRtriv_upper}
    Let $L$ be a binary \R-trivial regular language with $\sc(L)=n$, where $n\ge2$.
    Then $\sc(L^R)\le 2^{n-2}+n-1$. 
  \end{lemma}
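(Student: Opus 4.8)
The plan is to exploit the special structure of a minimal partially ordered binary DFA $M=(Q,\{a,b\},\delta,q_0,F)$ accepting $L$. Since $M$ is partially ordered, we can linearly order the states so that all transitions are either self-loops or move strictly ``forward'' in the order. First I would analyze how the subset construction behaves on the reverse NFA $M^R$. The key observation is that from each state, at most two transitions leave (one on $a$, one on $b$), and partial ordering severely restricts which subsets can appear as reachable states of the subset automaton of $M^R$. The goal is to count the reachable subsets and show there are at most $2^{n-2}+n-1$ of them, rather than the full $2^{n-1}$.

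The central structural fact I would try to establish is that on a binary alphabet the partial order forces a kind of ``chain'' behaviour: because each state has only two outgoing letters and one of any pair of incomparable transitions must eventually be a self-loop in a partially ordered DFA, the sets reachable in the subset automaton cannot be arbitrary. Concretely, I would look at the maximal state (the sink in the forward order, which for a DFA without a dead state plays a distinguished role) and the states reaching it. I expect that the reachable subsets split into two groups: a ``rich'' family built over the first $n-2$ states contributing at most $2^{n-2}$ subsets, plus a small ``tail'' of at most $n-1$ additional subsets forced by the last one or two states and the initial set $F$ of the reverse automaton. Summing these gives the claimed $2^{n-2}+n-1$.

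To carry this out, I would proceed in the following order. I would first fix the topological order on $Q$ and record, for each state, the structure of its at most two non-loop transitions. Next, I would set up the subset automaton of $M^R$ and identify an invariant satisfied by every reachable subset—most likely a downward- or upward-closure property with respect to $\preceq$, or a constraint linking membership of the maximal element to membership of its unique predecessors under $a$ and $b$. Using this invariant I would bound the number of admissible subsets. Finally, I would verify the additive $+\,(n-1)$ correction term by accounting separately for the initial subset and the boundary subsets that the invariant does not fully constrain, and check the extreme small cases $n=2,3$ by hand to confirm the bound is consistent.

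The main obstacle I anticipate is pinning down the exact invariant that every reachable subset of the subset automaton of $M^R$ must satisfy, and proving it is genuinely restrictive enough to drop the count from $2^{n-1}$ to roughly $2^{n-2}$. The difficulty is that a crude ``partial order implies closure'' argument typically only removes a constant factor or a lower-order term, whereas here we need to halve the exponential. I expect the real leverage to come from the binary restriction specifically—that with only two letters, the two transitions out of the maximal non-sink states interact so that certain pairs of subsets can never both be reachable—so the hard part is formulating and justifying that pairwise incompatibility precisely, then translating it into the clean count $2^{n-2}+n-1$.
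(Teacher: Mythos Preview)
Your plan is headed in the right general direction---split the reachable subsets into a large block of size at most $2^{n-2}$ and a short tail of size at most $n-1$---but the mechanism you propose for obtaining the split is not the one that works, and this is a genuine gap.

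You are looking for a \emph{static} invariant on subsets (downward/upward closure, or a membership constraint tying the sink to its predecessors). There is no such invariant beyond the trivial one that every reachable subset contains the accepting maximal state $n$; that alone only gives $2^{n-1}$. The paper's argument is instead \emph{dynamic}: it performs a case analysis on the transitions out of the penultimate state $n-1$. After using complementation to assume $n$ is accepting, either (i) $n-1$ is itself maximal, whence $n-1$ lies in no reachable subset and we get $2^{n-2}$; or (ii) both letters send $n-1$ to $n$, whence in $M^R$ every single transition puts $n-1$ into the current subset, so all reachable subsets except possibly the initial one contain $n-1$; or (iii) exactly one letter, say $b$, sends $n-1$ to $n$ and $a$ is a self-loop. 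Case (iii) is the real content: any application of $b$ in $M^R$ introduces $n-1$ and it never leaves, so every subset \emph{not} containing $n-1$ must be reached from $F$ by a word in $a^*$. The $a$-subautomaton of a partially ordered DFA is a forest of depth at most $n-2$ (there is no $a$-edge from $n-1$ to $n$), so the orbit $\{F\cdot a^k : k\ge 0\}$ stabilises after at most $n-1$ steps. That is where the additive $n-1$ comes from.

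In short, the $2^{n-2}$ counts subsets containing both $n$ and $n-1$, the $n-1$ counts the $a^*$-orbit of $F$, and the reason the tail is short is not a combinatorial incompatibility between subsets but the fact that the tail is the orbit of a \emph{single} letter in a partially ordered DFA. Your sketch does not isolate state $n-1$, does not do the case split on its two outgoing letters, and does not identify the unary-orbit argument; without these, the ``invariant'' strategy you outline will not get below $2^{n-1}$.
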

  \begin{proof}
    Let $M=(\{1,\ldots,n\},\{a,b\},\delta,1,F)$ be a minimal partially ordered DFA
    with $n$ states such that $i\preceq j$ implies $i\le j$. 
    Let $M'$ denote the subset automaton of the NFA $M^R$.
    We show that $M'$ has at most $n-1$ reachable states that do not contain $n-1$. 
    By Lemma~\ref{lem:complement}, we can assume that state $n$ of $M$ is accepting,    
    otherwise we take the complement of $M$. 
    Then there are three cases in $M$ between states $n-1$ and $n$:
    (i)   state $n-1$ has self-loops under both letters $a$ and $b$,
    (ii)  both letters $a,b$ go from state $n-1$ to state $n$, or
    (iii) without loss of generality, the transition under $b$ goes from $n-1$ to $n$ and $a$ is a self-loop in state $n-1$.

    In the first case, 
    states $n$ and $n-1$ have self-loops under both letters in $M$.
    As $n-1$ is non-accepting (otherwise equivalent to $n$), $n$ appears in all and $n-1$ in no reachable states of $M'$.
    This gives at most $2^{n-2}$ reachable states in $M'$.
    
    In the second case,
    no sets without state $n-1$ are reachable in $M'$, except for $F$, 
    because state $n$ appears in all reachable states of $M'$ and 
    any transition of $M'$ generates state $n-1$ into the next state.
    Thus, expect for the initial state $F$ of $M'$, every reachable state of $M'$ contains both $n$ and $n-1$.
    Hence, the upper bound is at most $2^{n-2}+1$.

    In the third case, all subsets not containing state $n-1$  
    must be reachable in $M'$ 
    by strings in $a^*$. 
    We prove that at most $n-1$ such sets are reachable in $M'$.
    To this aim, it is sufficient to show that $F\cdot a^{n-1} = F\cdot a^{n-2}$,
    where $\cdot$ denotes the transition function of the subset automaton $M'$.
    The subautomaton of $M$, defined by restricting to the alphabet $\{a\}$,
    is a disjoint union of trees $T_q$ where $\delta(q,a)=q$ and $T_q$
    consists of all states that can reach $q$ by a string in $a^*$;
    see Fig.~\ref{fig:trees} for illustration.
    \begin{figure}[t]
      \centering
      \begin{tikzpicture}[->,>=stealth,shorten >=1pt,auto,node distance=1.8cm,
        state/.style={circle,minimum size=6mm, very thin,draw=black,initial text=}]
        \node[state,initial] (0) {$0$};
        \node[state] (1) [above right of=0] {$1$};
        \node[state] (2) [right of=0] {$2$};
        \node[state] (3) [right of=1] {$3$};
        \node[state] (5) [below right of=2] {$5$};
        \node[state] (6) [above right of=5] {$6$};
        \node[state] (7) [right of=5] {$7$};
        \node[state] (8) [right of=6] {$8$};
        \node[state] (4) [above of=8,node distance=1.4cm] {$4$};
        \node[state,accepting] (9) [right of=8] {$9$};
        \path
          (0) edge node {$a$} (1)
          (1) edge node {$a$} (3)
          (2) edge node {$a$} (3)
          (3) edge node {$a$} (4)
          (4) edge[loop above] node {$a,b$} (4)
          (5) edge node {$a$} (6)
          (6) edge node {$a$} (8)
          (8) edge[loop above] node {$a$} (8)
          (7) edge node {$a,b$} (8)
          (9) edge[loop above] node {$a,b$} (9)
          (0) edge[dotted] node {$b$} (2)
          (1) edge[bend left=25,dotted] node {$b$} (4)
          (2) edge[dotted] node {$b$} (5)
          (3) edge[dotted] node {$b$} (8)
          (5) edge[dotted] node {$b$} (7)
          (6) edge[dotted] node {$b$} (4)
          (8) edge[dotted] node {$b$} (9) ;
      \end{tikzpicture}
      \caption{There are three trees, namely $T_4=\{0,1,2,3,4\}$, $T_8=\{5,6,7,8\}$, and $T_9=\{9\}$; $b$-transitions are dotted.}
      \label{fig:trees}
    \end{figure}
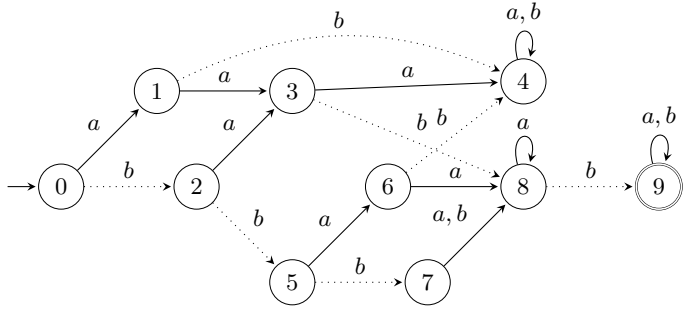
    Let $k$ be the depth of $T_q$, and let $F' = F\cap T_q$.
    If $q \in F'$, then $F'\cdot a^{k} =T_q$. 
    If $q \notin F'$, then $F'\cdot  a^{k}=\emptyset$.
    In both cases, $F'\cdot a^{k} =F'\cdot a^{k+1}$.
    Now $F \cdot a^m$ is a disjoint union of such $F'\cdot a^m$. 
    By the assumption, all trees are of depth at most $n-2$;
    recall that there is no $a$-transition from $n-1$ to~$n$.
    Hence $ F\cdot a^{n-1} =F\cdot a^{n-2}$ follows.
  \qed
  \end{proof}

  The following lemma shows the lower bound $2^{n-2}$
  on the state complexity of the reverse of binary \R-trivial regular languages.

  \begin{lemma}\label{le:binaryRtriv_lower}
    For every $n\ge3$,
    there exists a binary \R-trivial regular language $L$ with $\sc(L)=n$
    such that $\sc(L^R)\ge 2^{n-2}$.
  \end{lemma}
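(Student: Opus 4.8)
The plan is to exhibit an explicit binary \R-trivial witness meeting the lower bound $2^{n-2}$, rather than argue abstractly. Looking at the upper bound proof of Lemma~\ref{lem:binaryRtriv_upper}, the bound $2^{n-2}$ is the dominant term and it arose precisely from the third case, where under $a$ the DFA is a disjoint union of trees and $b$ makes the transition from $n-1$ to $n$. So I would build a DFA that is ``as far as possible'' from the degenerate first and second cases: a partially ordered DFA $M=(\{1,\ldots,n\},\{a,b\},\delta,1,F)$ on states ordered $1\prec 2\prec\cdots\prec n$, in which the top two states $n-1$ and $n$ realize case (iii) (so $a$ self-loops on $n-1$, and $b$ sends $n-1$ to $n$), and in which states $1,\ldots,n-1$ together with one of the letters behave like a full binary ``spanning'' chain so that the reachable subsets of $M'$ are abundant.

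First I would fix the transition function explicitly. A natural choice that keeps $M$ partially ordered is to let $b$ implement a chain $i \mapsto i{+}1$ on the lower states and let $a$ act so that the subset automaton $M^R$ can reach every subset of $\{1,\ldots,n-2\}$ (each optionally together with $n$). Concretely, one wants the $2^{n-2}$ subsets of $\{1,\ldots,n-2\}$ (or of a fixed $(n-2)$-element collection among $\{1,\ldots,n-1,n\}$) to all arise as reachable states of the subset automaton $M'$ of $M^R$. Second, I would verify \R-triviality by checking that $M$ is partially ordered, i.e.\ that every transition $\delta(i,c)$ satisfies $i\preceq\delta(i,c)$ with the only cycles being the self-loops; by the characterization recalled in the Preliminaries this makes $L=L(M)$ \R-trivial. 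Third, I would show $\sc(L)=n$ by checking all $n$ states are reachable (via strings in $b^*$ along the chain) and pairwise distinguishable (state $i$ is separated from state $j$ by the word that reaches the accepting state from the lower-indexed one but not the other). Fourth, and this is the crux, I would count the reachable subsets in $M'$: using Fact~\ref{le:equiv} distinguishability is automatic, so it suffices to show that at least $2^{n-2}$ distinct subsets are reachable from the initial set $F$ of $M^R$, giving $\sc(L^R)\ge 2^{n-2}$.

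The main obstacle will be the reachability argument in step four: one must design $\delta$ so that the NFA $M^R$, read backwards, genuinely generates all $2^{n-2}$ target subsets, and then prove this reachability cleanly. The standard technique is an induction on the size of the subset, showing that from any reachable set one can add or remove a chosen element by applying a suitable short word, thereby reaching every subset of an $(n-2)$-element ground set. I expect the right construction to let $a$ (or a short word in $a,b$) act as a ``toggle/shift'' on the lower states in reverse, so that a greedy induction from the full set down to the empty set (or from a singleton up) sweeps out all subsets; the delicate point is ensuring no two of these target subsets collapse, which is exactly what Fact~\ref{le:equiv} guarantees once the DFA $M$ is minimal. Thus the real work is purely the combinatorial reachability count, and I would present it as an explicit family (likely a minor modification of the ternary witness of~\cite{ciaa2012} with the third letter folded into $a$) together with a short inductive lemma on which subsets appear.
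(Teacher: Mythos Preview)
Your plan contains a structural mistake: you deliberately aim for case~(iii) of Lemma~\ref{lem:binaryRtriv_upper}, building a DFA whose unique maximal state is the accepting state $n$ (with $a$ a self-loop on $n-1$ and $b$ sending $n-1$ to $n$). But a binary partially ordered DFA with a \emph{single} maximal state cannot reach $2^{n-2}$ subsets in the reverse once $n\ge 8$. This is exactly what the finer analysis in Lemma~\ref{lem:eight} establishes: with one maximal state the reachable subsets number at most $2^{n-3}+\min((n-2)^2,2^{n-3})+(n-1)$, which falls below $2^{n-2}$ for $n\ge 9$, and $n=8$ fails by direct computation (Table~\ref{tab}). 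So no construction along the lines you describe can work uniformly for all $n\ge 3$; your inductive reachability argument would simply run out of subsets. You have also misread where the $2^{n-2}$ term in Lemma~\ref{lem:binaryRtriv_upper} comes from: it is the trivial bound on subsets containing both $n$ and $n-1$ and appears in all three cases, not specifically in case~(iii).

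The paper instead takes what you dismissed as the ``degenerate'' situation: it uses \emph{two} maximal states, an accepting sink $0$ and a dead state $n-1$. States $n-2,\ldots,2$ form a chain to state $1$ under both letters, while state $1$ goes to $0$ under $a$ and to the dead state under $b$. In $M^R$ the initial set is $\{0\}$, and one shows by induction on size that every $S\subseteq\{0,1,\ldots,n-2\}$ with $0\in S$ is reachable: the set $\{0,i_1,\ldots,i_k\}$ with $1\le i_1<\cdots<i_k\le n-2$ is reached from the smaller set $\{0,i_2-i_1,\ldots,i_k-i_1\}$ by the word $ab^{i_1-1}$. The dead state is what makes this work: in $M^R$, letter $b$ shifts $\{1,\ldots,n-3\}$ up by one \emph{without} producing a new element from $0$ (the $b$-preimage of $0$ in $M$ is just $0$ itself, because $1$ goes to the dead state under $b$), so one can freely translate a configuration before inserting the next element via $a$.
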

  \begin{proof}
    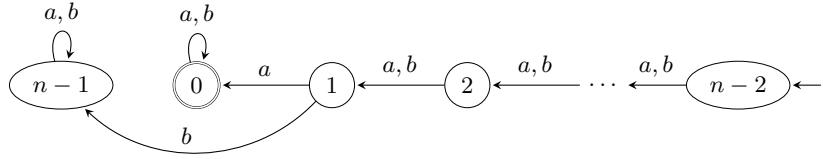
\begin{figure}[t]
      \centering
      \begin{tikzpicture}[->,>=stealth,shorten >=1pt,auto,node distance=1.8cm,
        state/.style={ellipse,minimum size=6mm,very thin,draw=black,initial text=}]
        \node[state,accepting] (0) {$0$};
        \node[state] (1) [right of=0] {$1$};
        \node[state] (2) [right of=1] {$2$};
        \node[     ] (4) [right of=2] {$\ldots$};
        \node[state,initial right] (5) [right of=4] {$n-2$};
        \node[state] (d) [left of=0] {$n-1$};
        \path
          (0) edge[loop above] node {$a,b$} (0)
          (d) edge[loop above] node {$a,b$} (d)
          (1) edge node[above] {$a$} (0)
          (1) edge[bend left=45] node[above] {$b$\quad\,} (d)
          (2) edge node[above] {$a,b$} (1)
          (5) edge node[above] {$a,b$} (4)
          (4) edge node[above] {$a,b$} (2) ;
      \end{tikzpicture}
      \caption{A binary \R-trivial regular language meeting the bound $2^{n-2}$ for the reverse.}
      \label{fi:binaryRtriv}
    \end{figure}
    Consider the language $L$
    accepted by the partially ordered binary \mbox{$n$-state} DFA $M$
    depicted in Fig.~\ref{fi:binaryRtriv}.
    We show that 
    each subset of $\{0,1,\ldots,n-2\}$ containing $0$
    is reachable in the subset automaton of the NFA $M^R$.
    The proof is by induction on the size of subsets.
    The subset $\{0\}$ is the initial state of the subset automaton.
    Each subset $\{0,i_1,i_2,\ldots,i_k\}$ of size $k+1$ 
    with $1\le i_1 < i_2 < \cdots <i_k \le n-2$
    is reached from the subset $\{0,i_2-i_1,\ldots,i_k-i_1\}$ of size $k$
    by the string $ab^{i_1-1}$. 
  \qed
  \end{proof}

  Using a computer program 
  we have computed a few tight bounds 
  summarized in Table~\ref{tab}.
  The bound $2^{n-2}+(n-1)$ is met by a DFA for $L$ with $\sc(L)=n$ if $n\le 6$,
  but not if $n=7$. 
  In addition, more than $2^{n-2}$ states are reachable if $n\le 7$,
  but not if $n=8$.
  By Lemma~\ref{lem:complement}, this means that for $n=8$, the worst-case minimal partially ordered DFA has a dead state and so does its complement.
  It is worth mentioning that the witness languages are even \J-trivial, hence these tight upper bounds also apply to binary \J-trivial regular languages discussed in the next section.
  \begin{table}[h]
    \centering
      \begin{tabular}{c|c|c||c|c|l}
                     & \multicolumn{2}{c||}{Worst-case $\sc(L^R)$} &  &  & \\
                     & \multicolumn{2}{c||}{where DFA for $L$ is}  &  &  & \\
        $n=$         & without    & with                           & Upper bound   & Lower bound & \\
        $\sc(L)$     & dead state & dead state                     & $2^{n-2}+n-1$ & $2^{n-2}$   & \multicolumn{1}{c}{Witness}\\
        \hline
        1 & 1        & 1        & 1/2       & 1/2     & \\
        2 & 2        & 2        & 2         & 1       & $L_2=a^*b(a+b)^*$ \\
        3 & 4        & 4        & 4         & 2       & $L_3=b^*+b^*aL_2$ \\
        4 & 7        & 7        & 7         & 4       & $L_4=b^*aL_3$ \\
        5 & 12       & 12       & 12        & 8       & $L_5=b^*a(aL_3+bL_2)$ \\
        6 & 21       & 21       & 21        & 16      & $L_6=b^*a(b^*a + L_5)$ \\
        7 & {\bf 34} & 34       & {\bf 38}  & 32      & $b^*ab^*a(a+b)(\eps+aL_3+bL_2)$ \\
        8 & 55       & {\bf 64} & 71        & {\bf 64}&
      \end{tabular}
      \smallskip
      \caption{Tight bounds for the reverse of binary \R-trivial regular languages.}
      \label{tab}
  \end{table}
  
  We now prove that for $n\ge 8$, the upper bound is $2^{n-2}$ for binary languages. 

  \begin{lemma}\label{lem:eight}
    Let $n\ge8$ and let $L$ be a binary \R-trivial regular language with $\sc(L)=n$.
    Then $\sc(L^R)\le 2^{n-2}$ and the bound is tight.
  \end{lemma}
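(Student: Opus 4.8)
The plan is to prove the two halves separately, since the \emph{tightness} is essentially already available. The binary witness of Lemma~\ref{le:binaryRtriv_lower} (Fig.~\ref{fi:binaryRtriv}) has $\sc(L)=n$ and $\sc(L^R)\ge 2^{n-2}$, and it happens to contain a dead state; so the whole task reduces to proving the matching upper bound $\sc(L^R)\le 2^{n-2}$ for every $n\ge 8$.

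For the upper bound I would reuse the setup of Lemma~\ref{lem:binaryRtriv_upper}: take a minimal partially ordered DFA $M=(\{1,\dots,n\},\{a,b\},\delta,1,F)$ with $i\preceq j$ implying $i\le j$, and by Lemma~\ref{lem:complement} assume the top state $n$ is accepting (otherwise pass to $M^c$, which has the same reverse state complexity). Since $n$ is maximal it is a sink, $\delta(n,a)=\delta(n,b)=n$, so in the subset automaton $M'$ of $M^R$ the state $n$ lies in the initial set $F$ and is preserved by every transition; hence $n$ belongs to every reachable subset. I would then split according to whether $M$ has a dead state.

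If $M$ has a dead state $d$, then $d$ is a non-accepting sink and $d\ne n$; being a sink it is preserved-absent under both letters, and being non-accepting it is absent from $F$, so $d$ belongs to no reachable subset of $M'$. Every reachable subset therefore contains $n$ and omits $d$, leaving the $n-2$ remaining states free and at most $2^{n-2}$ possibilities. This disposes of the dead-state case cleanly, and it is exactly the situation realized by the witness. The remaining case is that $M$ has no dead state, i.e.\ $n$ is the \emph{unique} sink (a second accepting sink would accept $\Sigma^*$ and be equivalent to $n$, contradicting minimality), which forces us into subcase (ii) or (iii) of Lemma~\ref{lem:binaryRtriv_upper} for the pair $(n-1,n)$; there the crude count only gives $2^{n-2}+n-1>2^{n-2}$.

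The no-dead-state case is the crux and the step I expect to be the main obstacle. Here I would again separate the reachable subsets that omit $n-1$ from those containing it: as in Lemma~\ref{lem:binaryRtriv_upper}, $b$ forces $n-1$ into every successor (because $\delta(n-1,b)=n$ and $n$ is always present) while $a$ preserves the presence or absence of $n-1$, so the $n-1$-free subsets form a single $a$-path $F,Fa,Fa^2,\dots$ feeding, via $b$, into the $a,b$-closed family of subsets containing $n-1$. The naive estimates ($\le n-1$ subsets without $n-1$ and $\le 2^{n-2}$ with $n-1$) do not add up to $2^{n-2}$, so the real work is to quantify the \emph{trade-off} between them: a long $a$-path requires an $a$-chain threading almost all states, which rigidifies the $b$-transitions and sharply limits how many distinct subsets containing $n-1$ can be produced, and conversely a family of subsets with $n-1$ approaching $2^{n-2}$ in size leaves only a short $a$-path. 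Turning this qualitative ``one long chain versus many rich transitions'' picture into a numerical bound is the hard part; I expect it to yield a sub-exponential estimate—the values $1,2,4,7,12,21,34,55$ in Table~\ref{tab} for $n=1,\dots,8$ suggest growth of order $\phi^n$ rather than $2^{n}$—and the decisive inequality to verify is that this quantity is at most $2^{n-2}$ precisely when $n\ge 8$, the bound failing at $n=7$ because $34>32=2^{5}$, which is exactly why the hypothesis $n\ge 8$ is needed. Subcase (ii), where the overshoot is merely $+1$, should fall out of the same analysis by exhibiting a single unreachable subset containing $n-1$, a much milder task.
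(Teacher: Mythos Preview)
Your treatment of tightness and of the dead-state case is correct and matches the paper. The gap is in the no-dead-state case, where you correctly identify the difficulty but do not actually propose a mechanism to resolve it: the ``trade-off'' between a long $a$-path and many subsets containing $n-1$ remains a heuristic, and the conjectured $\phi^n$-type bound is neither proved nor obviously provable from what you have set up. As stated, your argument stops exactly where the work begins.

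The idea you are missing is to go one level deeper in the stratification. Instead of only splitting reachable subsets according to whether they contain $n-1$, the paper fixes a \emph{longest} path ending at $n$, takes its last three states $(n-2)\to(n-1)\to n$, and among the subsets containing both $n$ and $n-1$ further separates those that contain $n-2$ from those that do not. The first family is trivially at most $2^{n-3}$. For the second family, maximality of the path pins down almost completely the transitions out of $n-2$ (and out of any auxiliary neighbour $p$), and a finite case analysis shows that the subsets containing $n,n-1$ but not $n-2$ are all of the form $F\cdot a^k\cdot b\cdot b^\ell$ or $F\cdot a^k\cdot b\cdot a^\ell$ (or shorter variants), hence number at most $(n-2)^2$ or $2n-3$ depending on the case. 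Adding the at most $n-1$ subsets not containing $n-1$ gives
\[
2^{n-3}+\min\bigl(\max(2n-3,(n-2)^2),\,2^{n-3}\bigr)+(n-1),
\]
which is below $2^{n-2}$ for $n\ge 9$; the paper then disposes of $n=8$ by the computation in Table~\ref{tab}. Your two-level split cannot produce such a polynomial slack, because the family ``contains $n-1$'' is genuinely of size close to $2^{n-2}$ in the abstract; it is only the extra constraint ``does not contain $n-2$'' that forces a short, essentially one-dimensional description of the reachable sets.
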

  \begin{proof}
    Consider a minimal partially ordered $n$-state DFA $M$
    over a binary alphabet $\{a,b\}$.
    By definition, each maximal state of $M$ has self-loops under both letters $a$ and $b$, hence there are at most two nonequivalent maximal states in $M$.
    
    If there are two maximal states,
    then one of them is accepting and the other one is the dead state.
    The accepting state appears in all reachable subsets
    of the subset automaton of the NFA $M^R$,
    while the dead state appears in no reachable subset.
    Hence the number of reachable subsets is bounded by $2^{n-2}$.
    
    It remains to prove that $2^{n-2}$ is also the bound for $M$ with only one maximal state.
    If the only maximal state is the dead state, we take the complement that has the same state complexity by Lemma~\ref{lem:complement}
    and has no dead state. 
    Thus, assume that $M$ has a single maximal state, $n$, which is accepting.
    Note that if a minimal binary partially ordered DFA has at least four states,
    there is a path of length two in the automaton.
    Consider three last states of such a longest path, say $(n-2) \to (n-1) \to n$.
    In particular, there is no longer path from $n-2$ to $n$.  
    Note also that $n-1$ is not accepting, otherwise it is equivalent to $n$.
    As in the proof of Lemma~\ref{lem:binaryRtriv_upper},
    we can show that to reach the upper bound,
    the situation between states $n-1$ and $n$ must be as depicted in Fig.~\ref{fig:bound4}.
    \begin{figure}[t]
      \centering
      \begin{tikzpicture}[->,>=stealth,shorten >=1pt,auto,node distance=3.2cm,
        state/.style={ellipse,minimum size=6mm, very thin,draw=black,initial text=}]
        \node[state,accepting] (1) {$n$};
        \node[state] (2) [left of=1] {$n-1$};
        \node[state] (3) [left of=2] {$n-2$};
        \path
          (1) edge[loop above] node {$a,b$} (1)
          (2) edge node[above] {$b$} (1)
          (2) edge[loop above] node {$a$} (2)
          (3) edge node[above] {$x$} (2);
      \end{tikzpicture}
      \caption{Path of length two, where $x\in\{a,b\}$.}
      \label{fig:bound4}
    \end{figure}
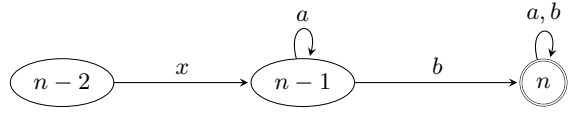
    
    We now compute the number of reachable sets in the subset automaton of the NFA $M^R$
    containing $n$ and $n-1$, but not $n-2$.
    Recall from the proof of Lemma~\ref{lem:binaryRtriv_upper} that $F \cdot a^k$, $k\ge0$,
    reaches at most $n-1$ different subsets. 

    If both $a,b$ go from state $n-2$ to state $n-1$,
    then there are at most $n-1$ subsets in the subset automaton of the NFA $M^R$ containing $n$ and $n-1$ and not $n-2$,
    namely $F\cdot a^k \cdot b$ with $k\ge 0$.
    
    If $x=a$, cf. Fig.~\ref{fig:bound4}, we have the following cases:
    (i) $b$ goes to $n$,
    (ii) $b$ goes to another state $p\notin\{n,n-1,n-2\}$ (the case $p=n-1$ is discussed above), or
    (iii) $b$ is a self-loop in $n-2$.

    In the first case, there is no subset containing $n$ and $n-1$ and not $n-2$ reachable in the subset automaton of $M^R$ 
    because $n-1$ is introduced by $b$, which also introduces $n-2$.

    In the second case, 
    $p$ must go to $n$ (and only to $n$ or $p$) 
    because $M$ has only one maximal state, $n$, and there is no longer path from $n-2$ to $n$.
    If $p$ goes to $n$ under $a,b$,
    then $p$ appears in all subsets containing $n$ and $n-1$, 
    hence only $F\cdot b$ contains $n$ and $n-1$ and not $n-2$.
    If $p$ goes to $n$ under $a$ and $b$ is a self-loop in $p$,
    then there are at most $(n-2)$ subsets containing $n$ and $n-1$ and not $n-2$, namely $F \cdot b \cdot b^k$ with $k\ge0$,
    computed similarly as in the proof of Lemma~\ref{lem:binaryRtriv_upper}.
    If $p$ goes to $n$ under $b$ and $a$ is a self-loop in $p$, then $p$ is equivalent to $n-1$ (if $p$ is non-accepting) or to $n$ (if $p$ is accepting), hence it is not possible.

    In the third case,
    all subsets reachable in the subset automaton of $M^R$ containing $n$ and $n-1$ and not $n-2$ are $F \cdot a^k \cdot b \cdot b^\ell$ with $k,\ell\ge0$. There are at most $(n-2)^2$ such subsets (at most $n-2$ nonempty subsets $F\cdot a^k$ in this case).

    If $x=b$, we have the following cases:
    (i) $a$ goes to $n$,
    (ii) $a$ goes to another state $p\notin\{n,n-1,n-2\}$, or
    (iii) $a$ is a self-loop in $n-2$.

    In the first case, there at most $n-1$ subsets containing $n$ and $n-1$ and not $n-2$, namely $F\cdot a^k \cdot b$ with $k\ge 0$.

    In the second case,
    $p$ must again go to $n$ (and only to $n$ or $p$) for the same reason as above.
    If $p$ goes to $n$ under $a,b$,
    then $p$ appears in all subsets containing $n$ and $n-1$, 
    hence at most $n-1$ subsets, $F\cdot a^k \cdot b$ with $k\ge 0$, contain $n$ and $n-1$ and not $n-2$.
    If $p$ goes to $n$ under $a$ and $b$ is a self-loop in $p$,
    then there are at most $2n-3$ subsets containing $n$ and $n-1$ and not $n-2$,
    namely $n-1$ subsets $F\cdot a^k\cdot b$ and $n-2$ subsets $F\cdot b \cdot b^k\cdot a$ with $k\ge 0$.
    If $p$ goes to $n$ under $b$ and $a$ is a self-loop in $p$,
    then $p$ is equivalent to $n-1$ (or to $n$, see above).

    In the third case, all subsets containing $n$ and $n-1$ and not $n-2$ are reachable only by strings with one $b$, i.e., the reachable subsets are $F \cdot a^k \cdot b \cdot a^\ell$ with $k,\ell \ge 0$. Their number is at most $(n-2)^2$ (at most $n-2$ subsets $F\cdot a^k$ in this case).

    By the proof of Lemma~\ref{lem:binaryRtriv_upper}, there are at most $2^{n-2}$ reachable sets in the subset automaton of $M^R$ containing $n$ and $n-1$, and at most $n-1$ reachable states not containing $n-1$. Thus, for $n\ge 4$ and $M$ with no dead state, the subset automaton of $M^R$ has at most $2^{n-3} + \min(\max(2n-3,(n-2)^2),2^{n-3}) + (n-1)$ reachable states (those containing $n,n-1,n-2$, those containing $n,n-1$ and not $n-2$, and those containing $n$ and not $n-1$, respectively),
    which is less than $2^{n-2}$ for $n\ge 9$.
    For $n=8$ is the bound given by computation (Table~\ref{tab}).
  \qed
  \end{proof}

  Denote by $f_k(n)$ the state complexity function of the reverse
  on binary \R-trivial regular languages over a $k$-element alphabet defined by
  \begin{equation*}\label{sc}
    f_k(n) = \max \makeset{\sc(L^R)}{L\subseteq\Sigma^*, |\Sigma|=k,
             L \text{ is \R-trivial regular, and }
             \sc(L)=n}.
  \end{equation*}
  Using this notation, 
  we can summarize our results
  in the following theorem.

  \begin{theorem}\label{thm1}
    Let $n\ge1$ and let $f_k(n)$ be the state complexity of
    the reverse
    on \R-trivial regular languages over a $k$-element alphabet.
    Then
    \begin{align*}
      & f_1(n) = n, \\
      & f_2(n)= \left\{\begin{array}{ll}
                    1,            & \text{ if $n=1$,} \\
                    2^{n-2}+n-1,  & \text{ if $2\le n\le 6$,} \\         
                    34,           & \text{ if $n=7$,} \\
                    2^{n-2},      & \text{ otherwise,} \\
                          \end{array}\right.\\
      &  f_3(n) = f_k(n) = 2^{n-1}, \text{ for every } k\ge3.
    \end{align*}
  \end{theorem}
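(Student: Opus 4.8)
The plan is to assemble the three alphabet regimes separately, drawing on the lemmas proved above, the computed values of Table~\ref{tab}, and the results of \cite{ciaa2012}. For the unary case $f_1(n)=n$ I would first observe that reversal acts trivially on unary strings, so $L=L^R$ and hence $\sc(L^R)=\sc(L)=n$ for \emph{every} unary language with $\sc(L)=n$, independently of the witness. A concrete $\mathcal{R}$-trivial language realizing $\sc(L)=n$ is $a^{n-1}a^*$, whose minimal DFA is a path of length $n-1$ ending in an accepting self-loop and is therefore partially ordered; this settles both bounds at once.

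For the binary case I would collect the established pieces according to the size of $n$. The value $f_2(1)=1$ is trivial. For $2\le n\le 6$, Lemma~\ref{lem:binaryRtriv_upper} gives the upper bound $2^{n-2}+n-1$, and tightness is witnessed by the languages $L_2,\ldots,L_6$ recorded in Table~\ref{tab}. The case $n=7$ is the one genuine exception: here Lemma~\ref{lem:binaryRtriv_upper} yields only the bound $2^{5}+6=38$, which the computation in Table~\ref{tab} shows is not attained, the true maximum being $34$; thus this value rests on the computational search rather than on a closed-form lemma. For $n\ge 8$, Lemma~\ref{lem:eight} supplies both the upper bound $2^{n-2}$ and its tightness (the boundary value $n=8$ being pinned down by the computation invoked inside that lemma), and Lemma~\ref{le:binaryRtriv_lower} provides the matching lower bound $2^{n-2}$, so $f_2(n)=2^{n-2}$.

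For alphabets of size at least three, the upper bound $f_k(n)\le 2^{n-1}$ holds for all $\mathcal{R}$-trivial languages by \cite{ciaa2012}, and the ternary witness of \cite{ciaa2012} meets it, giving $f_3(n)=2^{n-1}$. To extend this to every $k\ge 3$ I would argue monotonicity: starting from the ternary witness, adjoin $k-3$ fresh letters, each acting as a self-loop in every state. Such letters leave the reachability relation unchanged, so the DFA stays partially ordered and the language $\mathcal{R}$-trivial; they merge no states and leave every state reachable, so the state complexity remains $n$; and in the reverse subset automaton each new letter acts as the identity on subsets, so $\sc(L^R)$ is unaffected. Hence $f_k(n)\ge f_3(n)=2^{n-1}$, and with the universal upper bound we obtain equality for all $k\ge 3$.

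I expect the obstacles here to be bookkeeping rather than conceptual, since the substantive work lives in the preceding lemmas. The two points needing care are the values that fall outside the clean formulas, namely $f_2(7)=34$ and the boundary $f_2(8)=2^{6}=64$ (whose extremal DFA carries a dead state \emph{and} has a dead-state complement, so it escapes the reduction of Lemma~\ref{lem:complement} and must be imported directly from Table~\ref{tab}), and the verification that the self-loop extension in the last regime genuinely preserves minimality and $\mathcal{R}$-triviality.
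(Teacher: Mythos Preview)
Your proposal is correct and follows essentially the same route as the paper: the unary case via $L=L^R$, the binary case by combining Lemmas~\ref{lem:binaryRtriv_upper}, \ref{le:binaryRtriv_lower}, and~\ref{lem:eight} with the computed values in Table~\ref{tab} (including the exceptional $n=7$ and the boundary $n=8$), and the case $k\ge3$ via the ternary witness of~\cite{ciaa2012} (equivalently Corollary~\ref{cor:bound}) extended by self-loop letters. The only differences are expository---your explicit unary witness and your spelled-out verification that adjoining self-loop letters preserves minimality, partial order, and the subset-automaton reachability---and these are harmless elaborations of what the paper does in one line.
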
  
  \begin{proof}
    Since the reverse of every unary language is the same language,
    we have $f_1(n)=n$.
    The upper bounds on $f_2$ are given by Lemmas~\ref{lem:binaryRtriv_upper} and~\ref{lem:eight} and by 
    our calculations in the case of $n=7$.
    The lower bounds in the case of $1\le n\le 7$
    also follow from the calculations,
    while the case of $n\ge8$ is covered by Lemma~\ref{le:binaryRtriv_lower}.
    The result for $f_3$ is from Corollary~\ref{cor:bound}.
    Since adding  new letters to the ternary witness automata does not change
    the proofs  of reachability and distinguishability
    in the ternary case,
    the upper bound is tight for every $k\ge 3$.
  \qed
  \end{proof}

\section{\J-trivial regular languages}\label{***Jtriv}
  Every \J-trivial regular language is also \R-trivial, hence the previous bounds apply. To prove the results of this section, we first define {\em Simon's condition\/} on \R-trivial regular languages to be \J-trivial. 
  
  Let $M=(Q,\Sigma,\delta,q_0,F)$ be a DFA. It can be turned into a directed graph $G(M)$ with the set of vertices $Q$, where a pair $(p,q)\in Q\times Q$ is an edge in $G(M)$ if there is a transition from $p$ to $q$ in $M$. For $\Gamma\subseteq \Sigma$, we define the directed graph $G(M,\Gamma)$ with the set of vertices $Q$ by considering only those transitions that correspond to letters in $\Gamma$.
  
  For a directed graph $G=(V,E)$ and $p\in V$, the set 
  $C(p) = \{q\in V \mid q=p \text{ or there is a directed path from } p \text{ to } q\}$
  is called the component of $p$.

  \begin{definition}[Simon's condition]
    A DFA $M$ with an input alphabet $\Sigma$
    satisfies Simon's condition
    if, for every subset $\Gamma$ of $\Sigma$, each component of $G(M,\Gamma)$ has a unique maximal state.
  \end{definition}

  Simon~\cite{Simon1975} has shown the following result.
  \begin{fact}\label{st}\label{fact:simon}
    An \R-trivial regular language is \J-trivial if and only if its minimal partially ordered DFA satisfies Simon's condition.
  \end{fact}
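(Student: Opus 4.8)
The plan is to argue through the transition monoid $T$ of $M$. Since $M$ is the minimal DFA of $L$, $T$ is the syntactic monoid of $L$, so $L$ is \J-trivial if and only if $T$ is \J-trivial. Because $L$ is \R-trivial, $M$ is partially ordered and hence $T$ is aperiodic, and a finite aperiodic monoid is \J-trivial exactly when it satisfies the identity $(xy)^\omega=(yx)^\omega$, where $z^\omega$ denotes the idempotent power of $z$ in $T$. Thus the whole statement reduces to translating this single identity into Simon's condition. I would first record two elementary facts about a partially ordered $M$: reading a letter never moves a state down in $\preceq$, so (a) if $\delta(s,x)=s$ for a word $x$, then $s$ is fixed by every letter occurring in $x$ (the chain $s\preceq\delta(s,a_1)\preceq\cdots\preceq\delta(s,x)=s$ collapses by antisymmetry), and therefore (b) a state is maximal in $G(M,\Gamma)$ if and only if it is fixed by every letter of $\Gamma$.

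For the implication ``Simon's condition $\Rightarrow$ \J-trivial'', take words $x,y$ and put $\Gamma=\mathrm{alph}(xy)$. Reading $(xy)^N$ from a state $q$ produces a $\preceq$-increasing sequence that stabilizes at $s=\delta(q,(xy)^\omega)$, a fixpoint of $xy$. Writing $s\xrightarrow{x}s'\xrightarrow{y}s$ and using $s\preceq s'\preceq s$ forces $s'=s$, so $s$ is fixed by $x$ and by $y$; by (a) it is fixed by every letter of $\Gamma$, hence maximal in $G(M,\Gamma)$ by (b). As $s\in C(q)$, Simon's condition identifies $s$ with the unique maximal state of $C(q)$. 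Since $(yx)^\omega$ has the same alphabet $\Gamma$, the identical argument gives that $\delta(q,(yx)^\omega)$ is the same unique maximal state of $C(q)$. Hence $(xy)^\omega$ and $(yx)^\omega$ induce the same map on states, so they are equal in $T$, and with aperiodicity $T$ is \J-trivial.

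For the converse, suppose Simon's condition fails, so some $G(M,\Gamma)$ has a component $C(p)$ with two distinct maximal states $r\neq s$. By (b) both $r$ and $s$ are fixed by every letter of $\Gamma$. Choose $\Gamma$-words reaching $r$ and $s$ from $p$ and append to each a word $c$ enumerating all letters of $\Gamma$; this yields words $u,v$ with $\mathrm{alph}(u)=\mathrm{alph}(v)=\Gamma$, $\delta(p,u)=r$, and $\delta(p,v)=s$. Using that $r$ and $s$ are fixed by $\Gamma$, one computes $\delta(p,(uv)^\omega)=r$ and $\delta(p,(vu)^\omega)=s$. But \J-triviality of $T$ gives $(uv)^\omega=(vu)^\omega$, whence $r=s$, a contradiction. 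Therefore \J-triviality forces Simon's condition.

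The main obstacle I anticipate is the step in the first implication that shows the idempotent image $\delta(q,(xy)^\omega)$ is not merely \emph{some} maximal state reachable by powers of $xy$ but exactly \emph{the} maximal state of the whole component $C(q)$. This is where facts (a) and (b), together with the uniqueness clause of Simon's condition, must be combined carefully; the round-trip collapse $s'=s$ and the observation that a fixpoint of all of $\Gamma$ is maximal are the crux. Everything else is bookkeeping, provided one is careful that the appended enumeration $c$ really makes the alphabets equal to $\Gamma$ and that maximal states are genuine fixpoints of every $\Gamma$-letter.
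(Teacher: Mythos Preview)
Your argument is correct. The paper, however, does not prove this statement at all: it is stated as a Fact with a citation to Simon's original 1975 paper, so there is no ``paper's own proof'' to compare against beyond that reference.

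That said, your route is worth noting because it differs from Simon's original combinatorial treatment. You pass through the syntactic monoid and invoke the profinite-identity characterisation of the variety $\mathbf{J}$, namely that a finite aperiodic monoid is \J-trivial iff $(xy)^\omega=(yx)^\omega$; the whole statement then becomes a clean translation exercise between that identity and the unique-maximal-state condition on $G(M,\Gamma)$. Your two elementary observations (a) and (b) are exactly what make this translation work, and the round-trip collapse $s\preceq s'\preceq s\Rightarrow s'=s$ is the right hinge in the forward direction. The backward direction is equally sound: appending an enumeration $c$ of $\Gamma$ guarantees $\mathrm{alph}(u)=\mathrm{alph}(v)=\Gamma$, and since the two maximal states $r,s$ are fixed by every $\Gamma$-letter you get $\delta(p,(uv)^\omega)=r\neq s=\delta(p,(vu)^\omega)$, contradicting the identity. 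What your approach buys is brevity and a direct link to the algebraic machinery; what it costs is the dependency on the identity characterisation of $\mathbf{J}$, which is itself a nontrivial (though standard) result. Simon's original proof, by contrast, is more self-contained but also considerably longer.

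One minor remark: the step you flag as the ``main obstacle'' is in fact fully handled by your facts (a) and (b). Once you know $s$ is fixed by every letter of $\Gamma$, it is maximal in $G(M,\Gamma)$ outright, and since $s\in C(q)$ the uniqueness clause of Simon's condition pins it down as \emph{the} maximal state of $C(q)$; there is no residual gap there.
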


  Note that it is more efficient to use Trahtman's condition to decide whether an \R-trivial regular language is \J-trivial. For a state $p$, let $\Sigma(p)$ denote the set of letters under which there is a self-loop in $p$. Trahtman has shown that an \R-trivial regular language is \J-trivial if and only if its minimal partially ordered DFA satisfies that, for every state $p$, the connected component of $G(M,\Sigma(p))$ containing $p$ has a unique maximal state, see~\cite{Trahtman2001} for more details.
  
  Using Simon's result we immediately obtain the following lemma.
  \begin{lemma}\label{lem:removing}
    Let  $\Gamma\subseteq\Sigma$.
    If a partially ordered DFA $M$ over $\Sigma$ satisfies Simon's condition, 
    then the DFA $M'$ (not necessarily connected) obtained from $M$
    by removing transitions under letters from $\Gamma$
    also satisfies Simon's condition.
  \end{lemma}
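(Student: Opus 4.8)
The plan is to reduce Simon's condition for $M'$ directly to Simon's condition for $M$ by observing that, once attention is restricted to letters outside $\Gamma$, the two automata induce literally the same labelled graphs. Write $\Sigma'=\Sigma\setminus\Gamma$ for the alphabet of $M'$; verifying Simon's condition for $M'$ amounts to checking, for every subset $\Delta\subseteq\Sigma'$, that each component of $G(M',\Delta)$ has a unique maximal state.

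The first and essentially only step is the graph identity $G(M',\Delta)=G(M,\Delta)$ for every $\Delta\subseteq\Sigma'$. Both graphs have the same vertex set $Q$, since removing transitions deletes no states. An edge $(p,q)$ lies in $G(M,\Delta)$ exactly when $M$ has a transition from $p$ to $q$ under some letter of $\Delta$; because $\Delta\cap\Gamma=\emptyset$, none of these transitions is deleted when passing from $M$ to $M'$, so $(p,q)$ is an edge of $G(M',\Delta)$ as well, and conversely. Hence the two graphs coincide, which means their components coincide and, inside each component, the notion of a maximal state coincides.

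Given this identity, I would finish immediately: fix any $\Delta\subseteq\Sigma'$. Since $\Delta$ is also a subset of $\Sigma$ and $M$ satisfies Simon's condition, each component of $G(M,\Delta)$ has a unique maximal state; by the identity the same holds for $G(M',\Delta)$. As $\Delta$ ranges over all subsets of the alphabet of $M'$, this is exactly Simon's condition for $M'$.

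The only real care needed, and thus the \emph{obstacle} (a mild one), is definitional: one must confirm that passing to the smaller alphabet $\Sigma'$ changes neither the vertex set nor the $\Delta$-edges, and that the maximal-state notion invoked by Simon's condition is the one read off the graph $G(M',\Delta)$ itself rather than from the global reachability order of $M$. Once this is pinned down, the possible disconnectedness of $M'$ is harmless, since Simon's condition is stated componentwise and never refers to the initial state; this is precisely why the lemma can afford the parenthetical ``not necessarily connected''.
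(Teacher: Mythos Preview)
Your proof is correct and takes essentially the same approach as the paper: both reduce Simon's condition for $M'$ to Simon's condition for $M$ via the observation that for any $\Delta\subseteq\Sigma'=\Sigma\setminus\Gamma$ the graphs $G(M',\Delta)$ and $G(M,\Delta)$ coincide. The paper's proof is a terse one-liner, whereas you spell out the graph identity and the componentwise reasoning explicitly; your added care about the maximal-state notion being read off the graph $G(M',\Delta)$ itself is a useful clarification that the paper leaves implicit.
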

  \begin{proof}
    Let $\Sigma'=\Sigma\setminus\Gamma$.
    By Fact~\ref{fact:simon}, each component of $G(M,\Sigma')$ has a unique maximal state and remains partially ordered.
    \qed
  \end{proof}

  We now prove the main result of this section.
  \begin{theorem}\label{thm:boundJtriv}
    At least $n-1$ letters are necessary for a \J-trivial regular language of the state complexity $n$ to reach the state complexity $2^{n-1}$ in the reverse.
  \end{theorem}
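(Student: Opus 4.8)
The plan is to prove the contrapositive: if $L$ is \J-trivial with $\sc(L)=n$ over an alphabet $\Sigma$ of size $k\le n-2$, then $\sc(L^R)<2^{n-1}$. Let $M=(\{1,\dots,n\},\Sigma,\delta,1,F)$ be the minimal DFA of $L$, which is partially ordered and, by Fact~\ref{fact:simon}, satisfies Simon's condition; order the states so that $i\preceq j$ implies $i\le j$. Applying Simon's condition to $\Gamma=\Sigma$ and using that every state is reachable from the initial state (so the component $C(1)$ is all of $Q$), the set $Q$ has a unique maximal state; being maximal in a partially ordered DFA, it carries self-loops under every letter, so it is either an accepting sink or the dead state. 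By Lemma~\ref{lem:complement} we may pass to the complement, which is again \J-trivial with the same transition structure and hence still satisfies Simon's condition, and assume this state is $n$ and accepting. Writing $M'$ for the subset automaton of $M^R$, the initial set $F$ contains $n$, and each reverse transition fixes the membership of $n$ because $n$ has a self-loop under every letter; hence every reachable subset contains $n$, which recovers the known bound $2^{n-1}$. It therefore suffices to exhibit one subset of $\{1,\dots,n\}$ that contains $n$, is different from $F$, and is unreachable in $M'$.

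The mechanism I would use is the following saturation principle. For a letter $a$, the reverse transition $T\mapsto\{q : \delta(q,a)\in T\}$ is exactly the preimage map of the transformation $\delta_a:=\delta(\cdot,a)$, and as such it is a Boolean homomorphism of $2^Q$ whose image is precisely the family of subsets that are unions of blocks of the kernel partition $\ker(\delta_a)$. Consequently, every state of $M'$ reached after at least one transition, say with last letter $a$, is $\ker(\delta_a)$-saturated. Hence any subset $S\ni n$ with $S\ne F$ that \emph{splits} a block of $\ker(\delta_a)$ for \emph{every} letter $a$ cannot be reached in $M'$. The whole problem is thus reduced to producing, under the hypothesis $k\le n-2$, such a simultaneously splitting subset; equivalently, to showing that the subsets containing $n$ that are $\ker(\delta_a)$-saturated for some $a$, together with $F$, do not exhaust all $2^{n-1}$ subsets containing $n$.

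To carry this out I would analyse the partitions $\ker(\delta_a)$ using Simon's condition. Since each letter acts monotonically ($\delta_a(q)\ge q$), a non-identity letter is strictly increasing and therefore non-injective, so its kernel always has a nontrivial block; Simon's condition applied to the two-letter sets $\Gamma=\{a,b\}$, where each component of $G(M,\{a,b\})$ has a unique maximal state, controls how the nontrivial blocks of different letters may interleave. I would show that with only $k\le n-2$ letters these constraints force the family $\bigcup_a\{S : n\in S,\ S\text{ is }\ker(\delta_a)\text{-saturated}\}$ to miss at least one subset containing $n$ other than $F$, producing the desired unreachable set. As a sanity check, when the kernels happen to form a refinement chain the union collapses to the saturated family of the finest kernel, of size at most $2^{n-2}$, so that adding $F$ still leaves a hole; the binary bound $2^{n-2}$ of Lemma~\ref{lem:eight} is the same phenomenon at $k=2$.

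The main obstacle is precisely the quantitative step of the previous paragraph. A naive union bound is useless: a single letter whose kernel has one nontrivial block already gives $2^{n-2}$ saturated subsets containing $n$, and two such letters could in principle cover all $2^{n-1}$ of them, which is exactly what the $(n-1)$-letter witness of~\cite{BrArXiv12} exploits to meet the bound. The real work, therefore, is to extract from Simon's condition a genuine restriction on how the nontrivial blocks of the $k$ kernels can overlap, a near-laminar or nesting behaviour, strong enough to guarantee a hole once $k\le n-2$. I expect this overlap analysis, rather than the setup or the saturation principle, to be the crux of the argument.
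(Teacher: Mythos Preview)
Your setup is sound: the normalisation to a unique accepting maximal state $n$ via Lemma~\ref{lem:complement} is correct, and the saturation principle---that any subset reached in $M'$ after a step by $a$ is a union of fibres of $\delta_a$---is a valid and useful observation. The reduction of the theorem to the statement that the union of the $\ker(\delta_a)$-saturated families (for $a\in\Sigma$) together with $\{F\}$ misses some subset containing $n$ is also correct.

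The genuine gap is that you never carry out this step; you only announce it. You yourself identify the ``overlap analysis'' as the crux and say you \emph{expect} it to work, but nothing in the proposal extracts from Simon's condition any concrete combinatorial constraint on the kernels $\ker(\delta_a)$, let alone one strong enough to bound the union. The hoped-for ``near-laminar or nesting behaviour'' is not established, and it is not clear it is even true in a form that would make a direct counting argument go through: Simon's condition speaks of unique maxima of components of $G(M,\Gamma)$, which is a statement about the \emph{ranges} and reachability structure of the transformations, not directly about the fibre partitions. Turning that into a usable laminarity of kernels is the whole theorem, not a routine lemma. As you note, the naive union bound fails already at $k=2$, and the $(n-1)$-letter witness shows the saturated families can cover everything; what is missing is precisely the mechanism that breaks at $k=n-2$.

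For comparison, the paper does not attempt a global counting of saturated subsets at all. It proceeds by induction on $n$: assuming $2^{n-1}$ subsets are reachable, it isolates the set $R$ of predecessors of $n$, the label set $\Gamma$ of transitions into $n$, and the maximal elements $\max(R)$ of the automaton with $\Gamma$ deleted. Simon's condition is used twice, locally: to show $\Gamma=\bigcup_{p\in\max(R)}\Sigma_p$, and to show that each $p\in\max(R)$ owns a private letter in $\Gamma$, whence $|\Gamma|\ge|\max(R)|$. Deleting $n$ and $\Gamma$ and collapsing $\max(R)$ yields a smaller automaton still satisfying Simon's condition, still with alphabet strictly smaller than states minus one, and---because all subsets containing $n$ and avoiding $\max(R)$ were assumed reachable---still attaining the full $2^{|Q''|-1}$ bound, contradicting the induction hypothesis. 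This structural reduction sidesteps entirely the kernel-overlap combinatorics you would need.
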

  \begin{proof}
    We prove by induction on the number of states
    that every partially ordered DFA $M$ satisfying Simon's condition with $n\ge 3$ states and at most $n-2$ letters
    has less than $2^{n-1}$ subsets reachable in the subset automaton of $M^R$. 
 
    The basis for $n=3$ holds since the automaton is over a unary alphabet, 
    which means that the set $\{F\cdot a^k \mid k\ge 0\}$ has at most three elements (cf. the proof of Lemma~\ref{lem:binaryRtriv_upper}).
    
    Assume that for some $k\ge 3$ the claim holds for every partially ordered DFA satisfying Simon's condition with at most $k$ states and $k-2$ letters. Let $M=(Q,\Sigma,\delta,q_0,F)$ be a partially ordered DFA satisfying Simon's condition with $|Q|=k+1$ states and $|\Sigma|<|Q|-1$ letters. We prove that less than $2^{|Q|-1}$ subsets are reachable in the subset automaton of the NFA $M^R$. To do this, we show that reachability of $2^{|Q|-1}$ subsets in the subset automaton of $M^R$ implies the existence of a partially ordered DFA $M''=(Q'',\Sigma'',\delta'',q_0'',F'')$ satisfying Simon's condition with $|\Sigma''|<|Q''|-1$ letters, $|Q''|\le k$ states and $2^{|Q''|-1}$ reachable subsets in the subset automaton of the NFA $M''^R$. However, by the induction hypothesis, the number of reachable subsets in the subset automaton of $M''^R$ is less than $2^{|Q''|-1}$, which means that the assumption of $2^{|Q|-1}$ reachable subsets in the subset automaton of $M^R$ cannot hold.
    
    We may assume that $M$ is connected and has no equivalent states, since any two equivalent states of $M$ appear in the same sets in the subset automaton of $M^R$, which implies reachability of less than $2^{|Q|-1}$ subsets in the subset automaton of $M^R$. Similarly for two or more connected components. We may also assume that the unique maximal state of $M$, denoted by $n$, is accepting. Indeed, a subset $X\subseteq Q$ is reachable in the subset automaton of the reverse of $M$ if and only if the set $Q\setminus X$ is reachable in the subset automaton of the reverse of the complement of $M$.
    
    To construct $M''$, we first define nonempty sets $S\subseteq Q\setminus F$ and $\Gamma\subseteq\Sigma$ such that $|S|\le|\Gamma|$ and use them to construct the (not necessarily connected) partially ordered DFA $M''$ from $M$ by removing state $n$ and all transitions labeled by letters from $\Gamma$ and joining all states of $S$ into a single state. We show that $M''$ satisfies Simon's condition and that it has $2^{|Q''|-1}$ reachable subsets in the subset automaton of the reverse. Since $|\Sigma|<|Q|-1$ and $|S|\le|\Gamma|$, we obtain that $|\Sigma''|=|\Sigma|-|\Gamma| < |Q|-|S|-1=|Q''|-1<k$ and induction applies.
    
    To construct the sets $S$ and $\Gamma$, 
    let $R=\{q\in Q\setminus\{n\} \mid \delta(q,a)=n, a\in\Sigma\}$ denote the set of all states different from $n$ with a transition to $n$,
    and let $\Gamma=\{a\in\Sigma \mid \delta(R,a)\cap\{n\}\neq\emptyset\}$ denote the set of letters connecting states of $R$ with state $n$.
    Note that $R$ and $\Gamma$ are nonempty.
    Let $M'$ be the $k$-state subautomaton of $M$ obtained by removing state $n$
    and all transitions labeled by letters from $\Gamma$.
    By Lemma~\ref{lem:removing}, $M'$ satisfies Simon's condition.
    
    Let $\max(R)$ denote the set of states of $R$ that are maximal in $M'$.
    For a state $p$ in $\max(R)$, let $C_{p}$ denote the connected component of $G(M')$ containing $p$,
    and let $\Sigma_p=\{a\in\Sigma \mid \delta(p,a)=n\}\subseteq\Gamma$ denote the set of labels connecting $p$ to $n$, see Fig.~\ref{fig:tree} for illustration.
    \begin{figure}[t]
      \tikzset{ 
        treenode/.style = {align=center, inner sep=0pt}, 
        ac/.style = {treenode, circle, black, draw=red,   text width=1.5em, very thick}, 
        na/.style = {treenode, circle, black,   draw=green, text width=1.5em, very thick} 
      }
      \centering
      \begin{tikzpicture}[-,>=stealth',level/.style={sibling distance = 5cm/#1, level distance = 1.3cm}]
        \node [ac] (1) {$n$}
          child{ node [na] (2) {$p$} edge from parent node[above left]  {$\Sigma_{p}$} }
          child{ node [na] (3) {$q$} edge from parent node[above right] {$\Sigma_{q}$} } ;
        \path
          (1) edge[loop above] node {$\Sigma$} (1)
          (2) edge[loop above] node {$\Sigma\setminus\Gamma$} (2)
          (3) edge[loop above] node {$\Sigma\setminus\Gamma$} (3);
        \draw (-2.5,-2) ellipse (1cm and 1.5cm);
        \draw (-2.5,-2.9) node {$C_{p}$};
        \draw (2.5,-2) ellipse (1cm and 1.5cm);
        \draw (2.5,-2.9) node {$C_{q}$};
      \end{tikzpicture}
      \caption{The partially ordered DFA $M'$; $\Gamma=\Sigma_p\cup\Sigma_q$.}
      \label{fig:tree}
    \end{figure}
    Note that $C_p$ and $C_q$ are not connected, for $p\neq q$, otherwise $p$ and $q$ are two maximal states of the connected component containing $C_p\cup C_q$. Next, we show that for every letter $a$ in $\Gamma$, there exists a state $s$ in $\max(R)$ such that $s$ goes to $n$ under $a$.
    Let $a$ be a letter from $\Gamma$ and let $r$ be a state in $R$ with an $a$-transition to $n$ such that no other state reachable from $r$ in $M$ goes to $n$ under $a$. If $r$ is not in $\max(R)$, there is a state $t$ in $\max(R)$ such that $r$ belongs to $C_t$. But then there are two maximal states in the component containing $r$ in the graph $G(M,(\Sigma\setminus\Gamma)\cup\{a\})$, namely $n$ and the one reachable from $t$ by letters from $(\Sigma\setminus\Gamma)\cup\{a\}$. Thus, $\Gamma = \bigcup_{p\in\max(R)} \Sigma_p$.
    Note that all states of $\max(R)$ are non-accepting; if a state $s$ of $\max(R)$ is accepting, then, by the assumption, subset $\{n\}$ is reachable in the subset automaton of $M^R$. This requires to eliminate state $s$ from the initial state $F$. However, it can be done only by a letter from $\Gamma$, which always introduces another state from $\max(R)$.
    
    We now prove that $|\Gamma|\ge |\max(R)|$, i.e., for every state $p$ in $\max(R)$,
    there exists a letter $\sigma_p$ in $\Sigma_p$
    that does not appear in $\Sigma_q$ for any other state $q$ in $\max(R)$.
    For the sake of contradiction,
    assume that there is a state $p$ in $\max(R)$ with $\Sigma_{p}\subseteq\bigcup_{q\in\max(R)}^{q\neq p}\Sigma_{q}$.
%
    Since all subsets containing $n$ and $p$ and not any $q$ from $\max(R)$
    different from $p$ are reachable in the subset automaton of $M^R$,
    state $p$ is introduced to the subset from $n$
    by a transition under a letter from $\Sigma_p$
    which also introduces a state $q\neq p$ into that subset.
    Since $\Gamma=\bigcup_{q\in\max(R)}^{q\neq p}\Sigma_{q}$, 
    any attempt to eliminate state $q$ results in the introduction of a state $q'$ different from $p$, which is a contradiction.
%
    

    Let $S=\max(R)$. Then $|\Gamma|\ge |S|$ as required.
    Recall that the states of $S$ are maximal in $M'$ and non-accepting.
    Thus, they do not appear in any reachable subset of the subset automaton of the reverse of $M'$.
    Construct the DFA $M''$ from $M'$
    by joining all states of $S$ into one state.
    Then the subset automaton of the reverse of $M''$
    has the same number of reachable subsets
    as the subset automaton of the reverse of $M'$,
    and $M''$ satisfies Simon's condition.

    Finally, we show that $M'$ (hence also $M''$) has $2^{|Q''|-1}$ reachable subsets in the subset automaton of the reverse.
    Since $n$ is accepting, each set $X$ containing $n$ and nothing from $S$ is reachable in the subset automaton of $M^R$ only by symbols from $\Sigma''=\Sigma\setminus\Gamma$ (otherwise a symbol from $S$ is introduced and we cannot get rid of states of $S$ anymore),
    hence the set $X\setminus\{n\}$ is reachable in the subset automaton of $M'^R$.
    As there are $2^{|Q|-(|S|+1)}=2^{|Q''|-1}$ such sets, $M''$ has $2^{|Q''|-1}$ reachable subsets in the subset automaton of the reverse.
    This leads to the contradiction explained above and completes the proof.
  \qed\end{proof}
  
  Using this result, we can prove the tight upper bound 
  on the state complexity of the reverse 
  for \J-trivial regular languages over an $(n-2)$-element alphabet.

  \begin{theorem}\label{thm3}
    Let $n\ge3$ and let $L$ be a \J-trivial regular language
    over an $(n-2)$-element alphabet with  $\sc(L)=n$.
    Then $\sc(L^R)\le 2^{n-1}-1$ and the bound is tight.
  \end{theorem}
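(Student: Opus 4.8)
The plan is to read the upper bound off Theorem~\ref{thm:boundJtriv} and to concentrate the effort on tightness. Since every \J-trivial language is \R-trivial, the general bound recalled in the introduction gives $\sc(L^R)\le 2^{n-1}$. As $L$ is presented over an $(n-2)$-element alphabet it has strictly fewer than $n-1$ letters, so Theorem~\ref{thm:boundJtriv} forbids the value $2^{n-1}$; hence $\sc(L^R)\le 2^{n-1}-1$. For the matching lower bound it suffices, by Fact~\ref{le:equiv}, to exhibit a single \J-trivial language over $n-2$ letters with $\sc(L)=n$ whose reverse subset automaton has at least $2^{n-1}-1$ reachable subsets: distinguishability is then automatic, and the upper bound pins the value to exactly $2^{n-1}-1$.

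For the witness I would first dispose of the base case $n=3$, where $n-2=1$, the alphabet is unary, and $L^R=L$; any unary language whose minimal DFA is a partially ordered chain into an absorbing state is \J-trivial and can be chosen with $\sc(L)=\sc(L^R)=3=2^{2}-1$. For $n\ge 4$ the natural strategy is to start from the $(n-1)$-ary witness of~\cite{BrArXiv12} used in Corollary~\ref{cor:bound}(ii), whose reverse subset automaton reaches all $2^{n-1}$ subsets containing the unique maximal accepting state, and to delete a single, carefully chosen letter $c$. By Lemma~\ref{lem:removing} the resulting $(n-2)$-letter DFA still satisfies Simon's condition, hence is still \J-trivial by Fact~\ref{fact:simon}; one then verifies that all $n$ states remain reachable and pairwise distinguishable, so the state complexity $n$ is preserved. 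The guiding example is $n=4$: the binary witness $L_4=b^*aL_3$ of Table~\ref{tab} is exactly such a trimmed automaton, and a direct computation shows that its reverse reaches the $7=2^{3}-1$ subsets of $\{P,A,B,C\}$ containing the maximal state, the only missing one being $\{P,A,C\}$.

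The quantitative heart of the argument is to show that deleting $c$ costs \emph{exactly} one reachable subset. I would argue this along the lines of the proof of Theorem~\ref{thm:boundJtriv}: the letters entering the maximal state correspond to the maximal states of the set $R$, and with only $n-2$ letters precisely one such maximal state is left without a private letter to the top, which blocks exactly one subset from being singled out in the reverse while leaving every other subset $X$ containing the maximal state reachable. Concretely, I would establish reachability of the surviving subsets by induction on $|X|$, exhibiting for each target $X$ an explicit string in $(\Sigma'')^*$ that produces it in the reverse subset automaton, exactly as in the reachability induction of Lemma~\ref{le:binaryRtriv_lower}. Combining the resulting at least $2^{n-1}-1$ reachable subsets with the upper bound and Fact~\ref{le:equiv} then yields $\sc(L^R)=2^{n-1}-1$.

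The step I expect to be the main obstacle is proving that the deletion of $c$ removes exactly one subset rather than several. Simon's condition tightly restricts which letters may enter the maximal state, so the witness and the deleted letter must be chosen so that the remaining $n-2$ letters still generate, in the reverse, all but one of the relevant $2^{n-1}$ subsets. Carrying out this reachability bookkeeping while simultaneously preserving minimality and Simon's condition is where the care is needed; the $n=4$ computation above is the template that the general inductive construction has to reproduce.
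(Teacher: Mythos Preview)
Your upper-bound paragraph is exactly what the paper does: combine the general $2^{n-1}$ bound with Theorem~\ref{thm:boundJtriv} to exclude the top value.

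For tightness, however, your plan and the paper's proof diverge, and your plan has a real gap. The paper does \emph{not} obtain the witness by deleting a letter from the $(n-1)$-ary witness of~\cite{BrArXiv12}; it writes down an explicit $n$-state DFA over $\{a_1,\dots,a_{n-2}\}$ (state set $\{0,1,\dots,n-1\}$, initial state $1$, accepting state $0$, with $\delta(i,a_j)=i+1$ for $j\ge i$, $\delta(i,a_{i-1})=0$, and self-loops otherwise), checks directly that it is minimal and satisfies Simon's condition, and then proves reachability of every subset containing $0$ except $\{0,1\}$ by a short induction on the size of the subset. The ``one missing subset'' is not deduced from any deletion argument; it simply falls out of the explicit reachability computation.

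Your deletion strategy leaves precisely the hard step undone. You yourself flag it: showing that removing a single letter $c$ from the \cite{BrArXiv12} witness costs \emph{exactly one} reachable subset in the reverse, while simultaneously keeping all $n$ states reachable and pairwise distinguishable. Nothing in Theorem~\ref{thm:boundJtriv} or its proof gives you this: that proof is an upper-bound argument showing $2^{n-1}$ is unattainable with fewer than $n-1$ letters, and its mechanism (each state of $\max(R)$ needs a private letter into $n$) does not tell you that losing one private letter loses only one subset. A priori, deleting a letter can collapse many subsets at once, or break minimality. Invoking Lemma~\ref{lem:removing} only preserves Simon's condition; it says nothing about reachability of states in the forward DFA or of subsets in the reverse. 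So as it stands the proposal reduces the theorem to a claim of the same difficulty as the theorem itself, without a mechanism to prove it. The paper sidesteps all of this by exhibiting the automaton and doing the reachability bookkeeping directly.
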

  \begin{proof}
    The upper bound follows from Theorem~\ref{thm:boundJtriv}.
    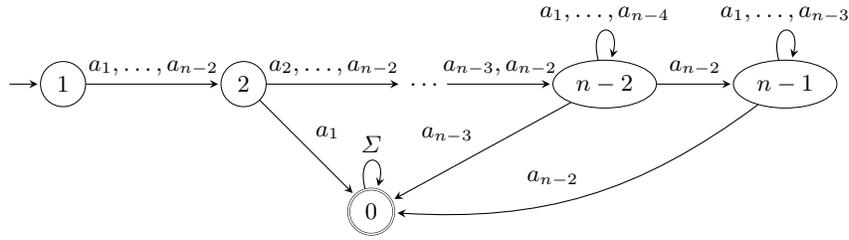
\begin{figure}[t]
      \centering
      \begin{tikzpicture}[->,>=stealth,shorten >=1pt,auto,node distance=2.4cm,
        state/.style={ellipse,minimum size=6mm, very thin,draw=black,initial text=}]
        \node[state,initial] (2) {$1$};
        \node[state] (3) [right of=2] {$2$};
        \node[     ] (4) [right of=3] {$\ldots$};
        \node[state] (5) [right of=4] {$n-2$};
        \node[state] (6) [right of=5] {$n-1$};
        \node[state,accepting] (d) [below right of=3] {$0$};
        \path
          (2) edge node {$a_1,\ldots,a_{n-2}$} (3)
          (3) edge node {$a_1$} (d)
          (3) edge node {$a_2,\ldots,a_{n-2}$} (4)
          (4) edge node {$a_{n-3},a_{n-2}$} (5)
          (5) edge[loop above] node {$a_1,\ldots,a_{n-4}$} (5)
          (5) edge node[above left] {$a_{n-3}$} (d)
          (5) edge node {$a_{n-2}$} (6)
          (6) edge[loop above] node {$a_1,\ldots,a_{n-3}$} (6)
          (6) edge[bend left=20] node[above left] {$a_{n-2}$} (d)
          (d) edge[loop above] node {$\Sigma$} (d) ;
      \end{tikzpicture}
      \caption{The witness minimal partially ordered DFA $M$ satisfying Simon's condition.}
      \label{fig:n2witness}
    \end{figure}
    Thus, to prove tightness, we consider 
    the \J-trivial regular language accepted by the minimal DFA 
    $M = (\{0,1,\ldots,n-1\},\{a_1,\ldots,a_{n-2}\},\delta,1,\{0\})$
    depicted in Fig.~\ref{fig:n2witness}.
    The transitions under a letter $a_j$ in $\Sigma$ are defined by
    \begin{eqnarray*} 
    \delta(i,a_j)= \left\{\begin{array}{ll}
                    i+1, & \text{if $i\le j\le n-2$,} \\
                    0,   & \text{if $i = j-1$,} \\
                    i,   & \text{otherwise.}
                  \end{array}\right.
    \end{eqnarray*}
    The initial state of the subset automaton of the NFA $M^R$ is the set $\{0\}$ and, for $1\le k\le n-2$, 
    every $(k+1)$-element set $\{0,i_1,i_2,\ldots,i_k\}$ 
    with $2\le i_1<i_2<\cdots<i_k\le n-1$ 
    is reached from the $k$-element set $\{0,i_2,\ldots,i_k\}$ by $a_{i_1-1}$.
    This gives $2^{n-2}$ reachable states (those containing 0 and not 1).
    Note that the set $\{0,1\}$ is not reachable,
    but all subsets of the state set of $M$ of cardinality at least three containing $0$ and $1$
    are reachable
    since every set $\{0 ,1,i_1,\ldots,i_k\}$ is reached from the set
    $\{0,2,i_2,\ldots,i_k\}$ by letter $a_{i_1-1}$.
    \qed
  \end{proof}

  Lemma~\ref{lem:eight} also gives the upper bound for binary \J-trivial regular languages. The witness languages in Table~\ref{tab} are \J-trivial. For $n\ge 8$, we need a dead state to reach the upper bound $2^{n-2}$ and our witness automata have a dead state (Fig.~\ref{fi:binaryRtriv}), hence the language is not \J-trivial.
  \begin{corollary}
    Let $L$ be a binary \J-trivial regular language with $\sc(L)=n$, 
    where $n\ge 4$, then $\sc(L^R)\le 2^{n-3} + \min(\max(2n-3,(n-2)^2),2^{n-3}) + (n-1)$. 
    A few tight upper bounds for $2\le n\le 7$ are given in Table~\ref{tab}.
    \qed
  \end{corollary}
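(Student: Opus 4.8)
The plan is to show that for a binary \J-trivial language the worst-case behaviour of the reverse is governed entirely by the single-maximal-state analysis already carried out inside the proof of Lemma~\ref{lem:eight}, and that \J-triviality (equivalently, Simon's condition) is precisely what excludes the two-maximal-state configuration that would otherwise permit the larger value $2^{n-2}$. Since every \J-trivial language is \R-trivial, the minimal DFA $M$ of $L$ is partially ordered, and by Fact~\ref{fact:simon} it satisfies Simon's condition. As in the proof of Theorem~\ref{thm:boundJtriv}, I may assume that $M$ is connected and has no equivalent states, because disconnected components and equivalent states only decrease the number of reachable subsets in the subset automaton of $M^R$.

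The key step is to apply Simon's condition with $\Gamma=\Sigma$. Then $G(M,\Sigma)=G(M)$, and the component of the initial state is the whole (connected) state set; Simon's condition forces this component to have a \emph{unique} maximal state, so $M$ has exactly one maximal state. This is exactly the situation in which the two-maximal-state case of Lemma~\ref{lem:eight}, the case yielding the bound $2^{n-2}$, cannot occur. Using Lemma~\ref{lem:complement} I then complement $M$ if necessary so that the unique maximal state is accepting; this preserves $\sc(L^R)$, keeps the language \J-trivial (the syntactic monoid is unchanged under complementation), and removes any dead state, since a non-accepting sink would be a second maximal state.

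At this point $M$ is a binary minimal partially ordered DFA with $n\ge 4$ states and a single accepting maximal state, which is exactly the hypothesis under which the second part of the proof of Lemma~\ref{lem:eight} shows that the subset automaton of $M^R$ has at most $2^{n-3} + \min(\max(2n-3,(n-2)^2),2^{n-3}) + (n-1)$ reachable states. That argument only needs a path of length two, i.e.\ $n\ge 4$, rather than the sharper threshold $n\ge 8$ used there to conclude $\le 2^{n-2}$, so it delivers the stated upper bound for all $n\ge 4$. For the tightness claims in the range $2\le n\le 7$, I invoke the witnesses listed in Table~\ref{tab}, which were already observed to be \J-trivial; hence the tabulated values are attained by binary \J-trivial languages.

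The main obstacle, and indeed the whole content of the statement, is the reduction to a unique maximal state: everything else is a direct reuse of the \R-trivial machinery. The points to verify with care are that Simon's condition genuinely eliminates the second maximal state (for which only $\Gamma=\Sigma$ is needed) so that the smaller formula rather than $2^{n-2}$ is the correct upper bound for the \J-trivial subclass, and that the complementation step preserves \J-triviality, which holds because complementation leaves the syntactic monoid unchanged.
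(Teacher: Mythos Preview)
Your proposal is correct and follows essentially the same approach as the paper: the corollary is stated with an immediate \qed, relying on the text preceding it, which says that Lemma~\ref{lem:eight} also gives the upper bound for binary \J-trivial languages and that the Table~\ref{tab} witnesses are \J-trivial. You have simply made explicit the one point the paper leaves implicit, namely that Simon's condition with $\Gamma=\Sigma$ forces a unique maximal state, so that only the single-maximal-state branch of the Lemma~\ref{lem:eight} analysis (yielding the displayed formula for $n\ge4$) is available. One harmless redundancy: since $M$ is already the minimal DFA, the remark about assuming connectedness and absence of equivalent states is unnecessary.
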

  
  Concerning the lower bound state complexity, it was shown in~\cite{CampeanuCSY99} that there are finite binary languages whose reverse have a blow-up of $3 \cdot 2^{\frac{n}{2}-1} - 1$, for $n$ even, and of $2^{\frac{n+1}{2}} - 1$, for $n$ odd. Since every finite language is \J-trivial, we obtain at least these lower bounds for binary \J-trivial regular languages.

\section{Conclusions}\label{***conclu}
  We have presented a characterization of tight bounds on the state complexity of the reverse for \R-trivial regular languages depending not only on the state complexity of the language, but also on the size of its alphabet. As a consequence, this characterization also gives upper bounds for \J-trivial regular languages, but they are not reachable for languages of the state complexity $n$ over an $(n-k)$-element alphabet, for $2\le k\le n-3$. We have further shown tight bounds for \J-trivial regular languages over $(n-1)$- and $(n-2)$-element alphabets, but (except for a few examples for binary \J-trivial regular languages) the problem of the tight bounds for \J-trivial regular languages over an alphabet of a lower cardinality is open.

\subsubsection*{Acknowledgements.}
  The authors gratefully acknowledge comments and suggestions of anonymous referees.

\bibliographystyle{splncs}
\bibliography{dcfs2013}

\end{document}